\newcommand {\mm}[1] {\ifmmode{#1}\else{\mbox{\(#1\)}}\fi}
\newcommand{\Rspace}{\mm{{\mathbb R}}}
\newcommand{\m}{\mathcal{M}}
\newcommand{\bm}{\partial\mathcal{M}}
\newcommand{\vol}{{\rm vol}}
 \newcommand{\para}[1]        {\noindent{\textbf{\textsf{#1}}}}
\newcommand{\lfs} {\mm{\rm lfs}}
\newcommand{\reach} {\mm{\rm reach}}
\newcommand{\myst} {\mm{\rm st}}
\newcommand{\Hgroup}        {\mm{\sf H}}
\numberwithin{equation}{section}
\theoremstyle{plain}
\newtheorem{theorem}{Theorem}[section]
\newtheorem{lemma}[theorem]{Lemma}
\theoremstyle{remark}
\newtheorem{remark}[theorem]{Remark}
\newtheorem{example}[theorem]{Example}
\begin{document}
%\listoffigures

\begin{frontmatter}
\title{Topological Inference of Manifolds with Boundary}
\runtitle{Topological Inference of Manifolds with Boundary}
%\thankstext{T1}{Footnote to the title with the `thankstext' command.}

\begin{aug}
\author{\fnms{Yuan} \snm{Wang}\thanksref{t2}\ead[label=e1]{yuanwang@math.northwestern.edu}}
\and
\author{\fnms{Bei} \snm{Wang}\thanksref{t3}\ead[label=e2]{beiwang@sci.utah.edu}}

\address{Northwestern University, Evanston, IL, USA\\
University of Utah, Salt Lake City, UT, USA\\
\printead{e1,e2}}

%\author{\fnms{Third} \snm{Author}
%\ead[label=e3]{third@somewhere.com}
%\ead[label=u1,url]{www.foo.com}}

%\address{Address of the Third author\\
%usually few lines long\\
%usually few lines long\\
%\printead{e3}\\
%\printead{u1}}

%\thankstext{t1}{Some comment}
\thankstext{t2}{NSF research grant DMS-\#1300750 and the Simons Foundation Award \# 256202}
\thankstext{t3}{NSF DBI-1661375 and NSF IIS-1513616}
\runauthor{Y. Wang and B. Wang}

\affiliation{Northwestern University, Evanston, IL, USA and University of Utah, Salt Lake City, UT, USA}

\end{aug}

\begin{abstract}
Given a set of data points sampled from some underlying space, 
there are two important challenges in geometric and topological data analysis when dealing with sampled data: 
reconstruction -- how to assemble discrete samples into global structures, 
and inference -- how to extract geometric and topological information from data that are high-dimensional, incomplete and noisy. 
Niyogi et al.~(2008) have shown that by constructing an offset of the samples using a suitable offset parameter 
could provide reconstructions that preserve homotopy types therefore homology for densely sampled smooth submanifolds of Euclidean 
space without boundary. 
Chazal et al.~(2009) and Attali et al.~(2013) have introduced a parameterized set of sampling conditions that extend 
the results of Niyogi et al.~to a large class of compact subsets of Euclidean space. 
Our work tackles data problems that fill a gap between the work of Niyogi et al.~and Chazal et al. 
In particular, we give a probabilistic notion of sampling conditions for manifolds with boundary that could not be handled by existing theories. 
We also give stronger results that relate topological equivalence between the offset and the manifold as a deformation retract.  
\end{abstract}

\begin{keyword}[class=MSC]
\kwd[Primary ]{62-07}
\kwd[; secondary ]{60B05}
\end{keyword}

\begin{keyword}
\kwd{Topological inference, topological data analysis, computational topology, sampling}
\end{keyword}
\tableofcontents
\end{frontmatter}

\section{Introduction}
\label{sec:introduction}

In \emph{manifold learning}, a topic of high interest is to understand the structure of low-dimensional objects embedded in high-dimensional space. 
Such objects are typically assumed to be (sub)manifolds of  Euclidean spaces. 
In recent years, it is becoming clear that the offsets of sampled data points on a manifold can reflect the geometric and topological structure of the manifold itself (e.g.~\cite{AmentaBern1999, ChengDeyRamos2005, NiyogiSmaleWeinberger2008}). 
In particular, given sampled points drawn from a probability distribution that has support on or near a submanifold without boundary, Niyogi et al.~\cite{NiyogiSmaleWeinberger2008} have shown that one can learn the homology of the submanifold with high confidence. 
More specifically, for a compact manifold $\m$ embedded in Euclidean space $\Rspace^N$ and a set of randomly sampled  data points $\overline{x}=\{x_1,...,x_n\}$ on $\m$, let $U=\bigcup_{x\in\overline{x}}B_{\epsilon}(x)$ be the offset of the data set $\overline{x}$, where $\epsilon$ is chosen to be small relative to the minimum local feature size of $\m$. Then for any $p\in (0,1)$, there is a number $m$ such that for all $n>m$, with probability $p$, $\m$ is a deformation retract of $U$. Therefore the homology of $U$
equals the homology of $\m$ (see \cite[Theorem 3.1]{NiyogiSmaleWeinberger2008} for details). 

Data that arise from smooth compact manifolds have been well-studied. 
However, the study of more complex spaces that are not necessarily manifolds via data samples seems much more difficult. 
When samples arise not from manifolds but from mixtures of manifolds with possible singularities, we are dealing with the notion of \emph{stratification learning}. 
Roughly speaking, a \emph{stratified space} is a space that can be decomposed in to manifold pieces (referred to as \emph{strata}) that are glued together in a nice way. 
The study of stratified spaces is a classic topic in pure mathematics~\cite{GoreskyMacPherson1988, Weinberger1994}. 
Statistical approaches rely on inferences of mixture models, local dimension estimation and subspace clustering~\cite{HaroRandallSapiro2005, LermanZhang2010, VidalMaSastry2005}. 
In geometric and topological data analysis, progress has been made recently in the study of stratified spaces in the discrete and  noisy settings~\cite{BendichCohen-SteinerEdelsbrunner2007, BendichWangMukherjee2012, SkrabaWang2014, BelkinQueWang2012, Nanda2017, BrownWang2018}, 
which draw inspirations from computational topology~\cite{EdelsbrunnerHarer2010}, intersection homology~\cite{Bendich2008, BendichHarer2011, GoreskyMacPherson1982}, graph theory and sheaf theory. 

Among stratified spaces, manifolds with boundary is one of the simplest forms. 
A manifold with boundary is a stratified space: one stratum is its boundary, and the other stratum is the complement. 

In this paper, we study the topology of offsets of data points on compact differentiable manifolds with boundary. 
We give a probabilistic notion of sampling conditions for manifolds with boundary that could not be handled by existing theories. 
In particular, we show that, with some care, a similar statement as \cite[Theorem 3.1]{NiyogiSmaleWeinberger2008} holds for manifolds with boundary. 
We also demonstrate via simple examples how our sampling lower bounds could be derived in practice. 

The main result of this paper, Theorem \ref{main}, is proved by following the framework of~\cite{NiyogiSmaleWeinberger2008}. 
First, we prove that the offset of data points deformation retracts to the manifold if the sample is sufficiently dense with respect to the local feature size of the manifold and the radius of the offset (Theorem \ref{deformretrth}).
Second, we show that such density is achieved with high  confidence when the data points are sufficiently abundant (Theorem \ref{probabestim}). However, our proof in detail is very different from that of \cite{NiyogiSmaleWeinberger2008}: 
particular efforts have been made to overcome the complexity caused by the boundary. 
When a data point is near or on the boundary, the local geometry around it is more complicated, and the original ways~\cite{NiyogiSmaleWeinberger2008} of performing deformation retract and estimating distances and volumes no longer work. We found that, through our arguments, such issues can be resolved by imposing concise and reasonable requirements regarding the minimum local feature size of \emph{both} the manifold and its boundary and the (local) projection of the manifold to its tangent spaces (see Condition \eqref{condition1} and \eqref{condition2}).

It is worth noting that Chazal et al.~\cite{ChazalCohen-SteinerLieutier2009} and Attali et al.~\cite{AttaliLieutierSalinas2013} have extended the results of Niyogi et al. to a large class of compact subsets of Euclidean space. 
Specifically, \cite{ChazalCohen-SteinerLieutier2009} shows that the offset $K^{\beta}$ of a compact set $K$ is homotopy equivalent to the offset $K'^{\alpha}$ of  another compact set $K'$, for sufficiently small $\beta>0$, if $\alpha$ satisfies some inequality involving the $\mu$-reach of $K'$ and the Hausdorff distance between $K$ and $K'$; \cite{AttaliLieutierSalinas2013} shows that the offset $X^{\eta}$ of a compact set $X$ is homotopy equivalent to the \v{C}ech (respectively the Rips) complex of certain radius of a point cloud $P$ in $X$ if some inequality involving $\eta$, the radius and the $\mu$-reach of $X$ holds. Compared to the result of~\cite{ChazalCohen-SteinerLieutier2009} and \cite{AttaliLieutierSalinas2013}, the one in this paper only deals with compact manifolds with boundary. However, our result does have its own advantages. First of all, the topological equivalence between the offset of data points and the manifold we get here is a deformation retract, which is stronger than homotopy equivalence derived in \cite{ChazalCohen-SteinerLieutier2009} and \cite{AttaliLieutierSalinas2013}. Next, we get a probability estimate for the topological equivalence which is not provided in \cite{ChazalCohen-SteinerLieutier2009} and \cite{AttaliLieutierSalinas2013}. Last but not least, it appears that there are  elementary cases of data on manifolds with boundary where the result in \cite{ChazalCohen-SteinerLieutier2009} does not apply, as the parameters associated with the data are completely outside the scope of \cite[Theorem 4.6]{ChazalCohen-SteinerLieutier2009} and \cite[Theorem 13 and 14]{AttaliLieutierSalinas2013}. A scenario is discussed in Section \ref{sec:results} (Example \ref{exbeyondccsl}), where our Theorems \ref{main} and \ref{deformretrth} 
become applicable and work well. 

\para{Result at a glance.}
In short, this paper improves our understanding of topological inference for manifolds with boundary and 
therefore enriches the toolbox for topological data analysis. 
Given a sample of $n$ points from a differentiable manifold $\m$ with boundary 
in a high-dimensional Euclidean space, for a sufficiently large $n$, 
the $\epsilon$-offset of the sample points is shown to have the same homotopy type as and deformation retracts to $\m$.
The homotopy equivalence result has been proved by Niyogi et al.~\cite{NiyogiSmaleWeinberger2008} for manifolds without boundary. 
Chazal et al.~\cite{ChazalCohen-SteinerLieutier2009} have extended the result of Niyogi et al.~to manifolds with boundary, 
however with weaker conclusions than the original paper. 
The current paper, instead, reaches the same conclusion as Niyogi et al.~\cite{NiyogiSmaleWeinberger2008} for manifold with boundary, 
but under two mild conditions for the boundary (i.e.,~regarding the minimum feature size of the boundary and the uniform smoothness of the tangent bundle).
A specific example shows that the new method is more powerful than that of Chazal et al.~\cite{ChazalCohen-SteinerLieutier2009}. 
Therefore, our results fill a gap between the original work of Niyogi et al.~\cite{NiyogiSmaleWeinberger2008} and 
the broader but weaker result of Chazal et al.~\cite{ChazalCohen-SteinerLieutier2009}.  
Our proof techniques are nontrivial, although they frequently use the results of Niyogi et al.~and make necessary adjustments. 
The theoretical results are complemented by experiments that confirm the theoretical findings.

\section{Notations and preliminaries}
\label{sec:prelim}

\subsection{Basics on manifolds}

In this paper, for two points $p,q\in \Rspace^N$, we use $|pq|$ to denote the line segment connecting $p$ and $q$, $\overrightarrow{pq}$ to denote the vector from $p$ to $q$, and $\|p-q\|$ to denote the Euclidean distance between $p$ and $q$. For a set $K\subset \Rspace^N$, $d(p,K):=\inf\{\|p-q\|: q\in K\}$ denotes the distance from $p \in \Rspace^N$ to $K$. Moreover, for a non-negative real number $\alpha \in \Rspace_{\geq 0}$, we use $K^{\alpha}$ to denote the offset of $K$ with radius $\alpha$, 
$K^{\alpha} = \{ p \in \Rspace^N \mid d(p,K) \leq \alpha\}$. 
$B_{r}(p)$ denotes the open ball with center $p$ and radius $r$. 

Let $\mathcal{M}\subseteq\Rspace^N$ be a compact, \emph{differentiable},  $k$-dimensional manifold possibly with boundary. Let $\bm$ denote the boundary of $\m$. Then $\bm$ is a compact manifold. Let $\m^{\circ}$ denote the interior of $\m$. 

The \emph{local feature size} of $\m$ is the function $\lfs: \m \to \Rspace_{\geq 0}$ defined by the distance from a point $x \in \m$ to its medial axis. 
% $\Gamma$, 
% $\lfs(x) = d(x, \Gamma) = \inf \{d(x, y): y \in \Gamma\} $.
The infimum of $\lfs$ is the reach of $\m$, $\reach(\m)$.
For every number $0<r<\reach(\m)$, the normal bundle about $\m$ of radius $r$ is embedded in $\Rspace^N$. In the same way we define $\reach(\bm)$, and since $\bm$ is also a compact manifold, $\reach(\bm)$ is well-defined.

%%We use $\lfs(\m)$ to denote the \emph{local feature size} of $\m$, which is the distance from $\m$ to its media axis. 
%For every number $0<r<\reach(\m)$, the normal bundle about $\m$ of radius $r$ is embedded in $\Rspace^N$. In the same way we define ${\rm lfs}(\bm)$, and since $\bm$ is also a compact manifold, ${\rm lfs}(\bm)$ is well-defined.

We use $\varphi_{p, \m}$ to denote the natural projection from $\m$ to $T_p(\m)$, the tangent space to $\m$ at the point $p$. That is, 
 $\varphi_{p, \m}: \m \to T_p(\m)$.  
Conversely for any point $q\in T_p(\m)$ we use $\varphi_{p, \m}^{-1}(q)$ to denote the set of points in $\m$ which maps to $q$ via $\varphi_{p, \m}$.

We take $\delta=\delta(\mathcal{M}) \in \Rspace_{\geq 0}$ to be any non-negative real number such that for any $p\in\m$, the following Condition~\ref{condition1} and Condition~\ref{condition2} are satisfied: 
\begin{ceqn}
\begin{align}
& \delta  <\min\{\reach(\m), \reach(\bm)\}, \label{condition1}\\
& \varphi_{p, \m}|_{B_{\delta}(p)\cap\mathcal{M}}  \textnormal{ is a diffeomorphism onto its image. } \label{condition2}
\end{align}
\end{ceqn}

Finally, suppose $\overline{x}=\{x_1,...,x_n\}$ is a set of sampled data points from a compact, differentiable manifold $\m$ with boundary. 
$\overline{x}$ is \emph{$\epsilon$-dense} if for any $p \in \m$, there is a point $x \in \overline{x}$ such that $x \in B_{\epsilon}(p)$. 
Let $U=\bigcup_{x\in\overline{x}}B_{\epsilon}(x)$ denote the offset of $\overline{x}$. 
We also define the canonical map $\pi:U\to \m$ by 
$$\pi(x):={\rm arg}\min_{p\in\m}\|x-p\|.$$
By Condition (1) and the definition of $\reach$, $\pi$ is well-defined.

\subsection{Volume of a hyperspherical cap} 
\label{sechypersphecap}

Let $S$ be a $k$-dimensional hypersphere of radius $r$. Let $H$ be a hyperplane that divides $S$ into two parts. We take the smaller part as a hyperspherical cap. Let $a$ be the radius of the base of the cap, and $\phi:=\arcsin(\dfrac{a}{r})$. Then by \cite{Li2011}, the volume of the hypershperical cap is 
$$V=\frac{2\pi^{\frac{k-1}{2}}r^k}{\Gamma(\frac{k+1}{2})}=\frac{1}{2}V_k(r)I_{\sin^2\phi}(\frac{k+1}{2},\frac{1}{2}),$$
where $\Gamma(x)$ is the Gamma function, $V_k(r)$ is the volume of the $k$-dimensional sphere with radius $r$ and $I_x(a,b)$ is the regularized incomplete beta function.

\section{The main results}
\label{sec:results}
The main theorem of this paper (Theorem~\ref{main}) is centered around a probabilistic notion of sampling conditions for manifolds with boundary, which relates the topological equivalence between the offset (of samples) and the manifold as a deformation retract. 
To the best of our knowledge, such a result has not been addressed by existing theories. 

\begin{theorem} \label{main}
Let $\mathcal{M}\subseteq\mathbb{R}^N$ be a compact differentiable $k$-dimensional manifold possibly with boundary. Let $\overline{x}=x_1, x_2, ..., x_n$ be drawn by sampling $\m$ in i.i.d fasion according to the uniform probability measure on $\m$. Let $\epsilon\in (0,\dfrac{1}{2}\delta(\m))$ and $U=\bigcup_{x\in\overline{x}}B_{\epsilon}(x)$. Then for all 
$$n> \beta(\epsilon)(\ln \beta(\frac{\epsilon}{2})+\ln(\frac{1}{\gamma})),$$
$U$ deformation retracts to $\m$ with probability $>1-\gamma$.  
In particular, with such confidence, the homology of $U$ is the same as that of $\m$. 
Here, $\displaystyle\beta(x):=\frac{\vol(\m)}{\frac{\cos^k\theta}{2^{k+1}}I_{1-\frac{x^2\cos^2\theta}{16\delta^2}}(\frac{k+1}{2}, \frac{1}{2})\vol(B_{x}^k)}$ and $\theta=\arcsin(\frac{x}{4\delta})$.
\end{theorem}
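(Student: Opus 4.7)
My plan is to follow the two-stage strategy sketched by the authors: first, an entirely deterministic geometric statement (Theorem~\ref{deformretrth}) that says an $(\epsilon/2)$-dense sample guarantees that $U$ deformation retracts to $\m$; second, a probabilistic density estimate (Theorem~\ref{probabestim}) that says i.i.d.\ uniform samples of size $n$ are $(\epsilon/2)$-dense with probability at least $1-\gamma$. Combining the two immediately yields Theorem~\ref{main}. This mirrors the architecture of \cite{NiyogiSmaleWeinberger2008}, with the novelty concentrated in handling points $p$ near $\bm$.

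For the deterministic step, assuming Theorem~\ref{deformretrth} as stated in the introduction, the retraction is built from the normal projection $\pi:U\to\m$. Conditions \eqref{condition1} and \eqref{condition2} guarantee $\pi$ is well defined and that local tangent-plane projection is a diffeomorphism on $B_\delta(p)\cap\m$, which is needed because near $\bm$ the naive straight-line homotopy of \cite{NiyogiSmaleWeinberger2008} can exit the manifold along the boundary. I would invoke Theorem~\ref{deformretrth} as a black box here, since the statement of Theorem~\ref{main} assumes it has already been proved.

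For the probabilistic step I would build a minimal $(\epsilon/4)$-net $\{y_1,\dots,y_\ell\}\subset\m$ in the intrinsic sense that $\{B_{\epsilon/4}(y_i)\}$ covers $\m$. Standard packing gives $\ell\le\beta(\epsilon/2)$: the balls $B_{\epsilon/8}(y_i)$ are disjoint and each contains volume at least $1/\beta(\epsilon/2)$ times $\vol(\m)$, by the lower bound on $\vol(B_{\epsilon/2}(p)\cap\m)$ that underlies the definition of $\beta$. If each $B_{\epsilon/4}(y_i)$ contains at least one sample $x_j$, the triangle inequality shows $\overline{x}$ is $(\epsilon/2)$-dense in $\m$. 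Using uniformity, the probability that a fixed $B_{\epsilon/4}(y_i)$ contains no sample is at most $(1-1/\beta(\epsilon))^n\le e^{-n/\beta(\epsilon)}$; a union bound over the $\ell\le\beta(\epsilon/2)$ balls gives failure probability at most $\beta(\epsilon/2)e^{-n/\beta(\epsilon)}$. Setting this $<\gamma$ and solving yields exactly $n>\beta(\epsilon)(\ln\beta(\epsilon/2)+\ln(1/\gamma))$.

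The real work, and the step I expect to be the main obstacle, is justifying the explicit volume lower bound that makes $\beta$ well defined, namely
\begin{equation*}
\vol\bigl(B_{r}(p)\cap\m\bigr)\;\ge\;\frac{\cos^k\theta}{2^{k+1}}\,I_{1-\frac{r^2\cos^2\theta}{4\delta^2}}\!\left(\tfrac{k+1}{2},\tfrac{1}{2}\right)\vol(B_r^k),\qquad \theta=\arcsin\!\tfrac{r}{2\delta},
\end{equation*}
uniformly in $p\in\m$, including $p\in\bm$. My approach would be to push the question to the tangent space via $\varphi_{p,\m}$, which by Condition~\eqref{condition2} is a diffeomorphism on $B_\delta(p)\cap\m$ whose Jacobian is controlled by $\cos\theta$ (bounded via $\reach(\m)$ and the angular deviation between $\m$ and $T_p\m$ inside $B_r(p)$). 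The projected image contains a hyperspherical cap in $T_p\m$: using the reach bound the image fills at least a fraction of a Euclidean $k$-ball, and the factor $2^{-(k+1)}$ accounts for the worst case where $p\in\bm$ cuts the cap roughly in half. The incomplete beta function arises from the cap-volume formula recalled in Section~\ref{sechypersphecap}. Once this lower bound is in hand, the two-step argument closes cleanly and Theorem~\ref{main} follows.
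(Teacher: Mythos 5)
Your architecture is exactly the paper's: Theorem~\ref{main} is obtained by combining the deterministic retract statement (Theorem~\ref{deformretrth}, which the paper also treats as a separate ingredient at this point) with the density estimate (Theorem~\ref{probabestim}), and the latter reduces to a uniform lower bound on $\vol(B_r(p)\cap\m)$ near $\bm$ plus the covering/packing union bound of Niyogi et al.\ (their Lemmas 5.1 and 5.2, which the paper cites rather than re-derives). So the route is the same; the issues are in whether your sketch actually closes. First, the bookkeeping in your union bound is inconsistent with the threshold you claim to recover. With an $\epsilon/4$-net $\{y_i\}$ and disjoint balls $B_{\epsilon/8}(y_i)$, the volume lemma gives $\mu(B_{\epsilon/4}(y_i))\ge 1/\beta(\epsilon/4)$ and $\ell\le\beta(\epsilon/8)$, not $1/\beta(\epsilon)$ and $\beta(\epsilon/2)$; a ball of radius $\epsilon/4$ can have far smaller measure than one of radius $\epsilon$, so the substitution goes in the wrong direction and the computation does not ``yield exactly'' $n>\beta(\epsilon)(\ln\beta(\epsilon/2)+\ln(1/\gamma))$. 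The paper instead feeds $\alpha=1/\beta(\epsilon)$ and $l=\beta(\epsilon/2)$ directly into the cited lemmas; whichever covering scale one adopts, the radius appearing in the measure bound and the radius appearing in the cardinality bound must be the radii of the sets actually used, and yours are not.

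Second, your explanation of the constant $2^{-(k+1)}$ misses the mechanism the paper actually uses, and this is the one genuinely new idea in the volume estimate. The boundary cutting the projected region contributes only a single factor $1/2$ (the hyperspherical cap of $B_{\epsilon\cos\theta}^k(p)\cap B_\delta^k(p')$ in Lemma~\ref{localvolest}, valid for $p\in\bm$). The remaining $2^{-k}$ arises from a recentering step in Lemma~\ref{volest}: for $p$ with $0<d(p,\bm)\le\epsilon/2$ one does not run the tangent-plane argument at $p$ itself but passes to the nearest boundary point $p'$ and uses $B_{\epsilon/2}(p')\subset B_\epsilon(p)$, paying $\vol(B^k_{\epsilon/2})=2^{-k}\vol(B^k_\epsilon)$. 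This recentering is what makes the estimate uniform over all $p$ near $\bm$, and it is absent from your sketch; relatedly, your displayed inequality carries $\theta=\arcsin(r/(2\delta))$ and $4\delta^2$, whereas the $\beta$ of the theorem requires $\arcsin(x/(4\delta))$ and $16\delta^2$ --- precisely the halved radius that the recentering produces. The remaining geometric content of Lemma~\ref{localvolest} (the two boundary cases $q\in\bm$ and $q\in\partial B_\epsilon(p)$, handled via auxiliary points at distance $\delta$ and Condition~\eqref{condition1}) is consistent with your plan but would need to be carried out to justify the cap inclusion you assert.
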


Theorem~\ref{main} is implied by combining Theorems \ref{deformretrth} and \ref{probabestim} below. 

\begin{theorem} 
\label{deformretrth}
Let $\bar{x}$ be any finite collection of points $x_1, ..., x_n\in \mathbb{R}^{N}$ such that it is $\dfrac{\epsilon}{2}$-dense in $\m$. Then for any $\epsilon<\dfrac{\delta}{2}$, we have that $U$ deformation retracts to $\m$. 
\end{theorem}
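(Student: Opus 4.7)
The plan is to mimic the Niyogi--Smale--Weinberger construction: use the nearest-point projection $\pi:U\to\m$ to build a straight-line deformation of $U$ onto $\m$, and verify that the resulting path stays inside $U$ for every starting point.

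Since $\epsilon<\delta/2$ and Condition~\eqref{condition1} gives $\delta<\reach(\m)$, every $y\in U$ lies within distance $\epsilon<\reach(\m)$ of $\m$, so $\pi(y)=\arg\min_{q\in\m}\|y-q\|$ is single-valued and continuous. I would then define $H(y,t)=(1-t)y+t\pi(y)$ and observe that $H(y,0)=y$, $H(y,1)=\pi(y)\in\m$, and $H(p,t)=p$ for every $p\in\m$; if $H(U\times[0,1])\subseteq U$, then $H$ realizes $\m$ as a deformation retract of $U$, which is the desired conclusion.

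The core step is verifying $H(y,t)\in U$. Fix $y\in U$, set $p=\pi(y)$ and $r=\|y-p\|<\epsilon$. Because the segment $[y,p]$ lies on a normal line through $p$ inside the tubular neighborhood of $\m$, we have $\pi(H(y,t))=p$ and $d(H(y,t),\m)=(1-t)r$ throughout. If $r\leq\epsilon/2$, the entire segment lies in $\m^{\epsilon/2}$, and the $\epsilon/2$-density of $\overline{x}$ in $\m$ places every such point inside some $B_\epsilon(x_k)\subseteq U$ (pick $x_k$ within $\epsilon/2$ of the nearest manifold point). If $r>\epsilon/2$, I would split $[0,1]$ into two overlapping subintervals: for $t\geq 1-\epsilon/(2r)$, we have $(1-t)r\leq\epsilon/2$ and the previous argument applies via a sample $x_j$ with $\|x_j-p\|\leq\epsilon/2$; for smaller $t$, I keep $H(y,t)$ inside the original ball $B_\epsilon(x_i)$ containing $y$ by decomposing $x_i-p=\alpha\nu+T$ with $\nu=(y-p)/r$, applying the chord-type estimate $|\alpha|\leq\|x_i-p\|^2/(2\reach(\m))$ (valid because $\|x_i-p\|<2\epsilon<\delta$), and bounding the resulting convex quadratic $\|H(y,t)-x_i\|^2$ in $t$ by $\epsilon^2$. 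The hypothesis $\epsilon<\delta/2$ is precisely what forces these two time intervals to overlap.

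The main obstacle is the behavior of this argument near $\partial\m$. When $p=\pi(y)\in\partial\m$, the normal segment $[y,p]$ is one-sided, and both the single-valuedness of $\pi$ along $[y,p]$ and the chord estimate above need to be re-justified; this is where Condition~\eqref{condition1} (through $\delta<\reach(\partial\m)$) and Condition~\eqref{condition2} (the diffeomorphism property of $\varphi_{p,\m}$ on $B_\delta(p)\cap\m$) enter. Together they ensure that, uniformly over $p\in\m$, the manifold near $p$ is a controlled graph over its tangent space, so $x_i-p$ still decomposes cleanly into normal and tangential components and the interior computation transfers. An additional check is needed to confirm that the overlap argument for the two time intervals uses only samples on the manifold side of $p$, where Condition~\eqref{condition2} supplies the required uniformity and prevents the straight-line homotopy from exiting the tangent-normal chart centered at $p$.
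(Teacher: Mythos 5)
Your overall strategy coincides with the paper's: both proofs retract $U$ onto $\m$ along the straight segment from $y$ to $\pi(y)$ and reduce the problem to showing that every such segment stays inside $U$ (the paper phrases this as star-shapedness of the fibers $\pi^{-1}(p)$, established through Lemmas \ref{allintubneighbor}--\ref{star}). Your interior analysis --- splitting the segment at distance $\epsilon/2$ from $\m$, handling the near-manifold piece by $\epsilon/2$-density and the far piece by a convex-quadratic estimate inside the original ball $B_{\epsilon}(x_i)$ --- is a reasonable reconstruction of the Niyogi--Smale--Weinberger computation and is appropriate when $p=\pi(y)\in\m^{\circ}$.

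The gap sits exactly where you write that ``an additional check is needed'': the boundary case, which is the entire new content of this theorem. When $p=\pi(y)\in\bm$, the direction $\nu=(y-p)/r$ need not be orthogonal to $T_p(\m)$; it only lies in the normal \emph{cone} of $\m$ at $p$, which contains the outward conormal direction inside $T_p(\m)$. Consequently the two-sided chord estimate $|\alpha|=|\langle x_i-p,\nu\rangle|\le \|x_i-p\|^2/(2\reach(\m))$ that you invoke is false there: already for a flat half-plane, a sample $x_i$ at distance $c$ from a boundary point $p$ in the conormal direction has $|\langle x_i-p,\nu\rangle|=c$, first order rather than second order in $c$. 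Federer's characterization of reach salvages only the one-sided bound $\langle x_i-p,\nu\rangle\le\|x_i-p\|^2/(2\reach(\m))$, and even granting that, your sketch does not verify that the resulting control on $\alpha$ is strong enough: when $r$ is close to $\epsilon/2$ the overlap of your two time intervals requires roughly $\alpha\le\epsilon/2$, while the one-sided bound under $\epsilon<\delta/2$ only yields $\alpha<\epsilon$. Asserting that Conditions \eqref{condition1} and \eqref{condition2} make ``the interior computation transfer'' is not an argument. The paper does something genuinely different at the boundary: it replaces $T_p(\m)^{\perp}$ by the larger space $T_p(\bm)^{\perp}$, which does contain the whole normal cone; it shows in Lemma \ref{pi-1contained} that any $v\in\pi^{-1}(p)$ with $p\in\bm$ satisfies $\|v-p\|\le\epsilon^2/\delta<\epsilon/2$ and hence lies in $\myst_{\bm}(p)$ via a sample within $\epsilon/2$ of $p$ --- this is precisely where the hypothesis $\epsilon<\delta/2$ is consumed; and it then combines the star-shapedness of $\myst_{\bm}(p)$ with the constancy of $\pi$ along segments (Lemma \ref{allintubneighbor}) to conclude. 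You would need to supply an analogue of this boundary analysis for your proof to close.
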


\begin{theorem} 
\label{probabestim}
Let $\overline{x}$ be drawn by sampling $\m$ in i.i.d fasion according to the uniform probability measure on $\m$. Then with probability $1-\gamma$, we have that $\overline{x}$ is $\dfrac{\epsilon}{2}$-dense ($\epsilon<\frac{\delta}{2})$ in $\m$ provided
$$|\overline{x}|\ge \beta(\epsilon)(\ln \beta(\frac{\epsilon}{2})+\ln(\frac{1}{\gamma})).$$
\end{theorem}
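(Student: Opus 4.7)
My plan is to adapt the covering-plus-coupon-collector argument from \cite{NiyogiSmaleWeinberger2008} to the boundary setting, with the new content concentrated in a uniform lower bound on $\vol(B_r(p)\cap\m)$ that remains valid for $p$ on or near $\bm$.

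I would first fix a maximal $\alpha$-separated subset $\{p_1,\ldots,p_\ell\}\subseteq\m$ at a suitable scale $\alpha$ (morally $\alpha=\epsilon/4$). By maximality the balls $B_\alpha(p_i)$ cover $\m$, while the smaller balls $B_{\alpha/2}(p_i)\cap\m$ are pairwise disjoint. The reduction driving the proof is that if every $B_\alpha(p_i)\cap\m$ contains at least one sample point $x_j$, then the triangle inequality places some sample within $\alpha+\alpha\leq\epsilon/2$ of any given $p\in\m$, so $\overline{x}$ is $\epsilon/2$-dense. The sampling question thus reduces to a coupon-collector problem over the $\ell$ net balls.

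The technical heart of the proof is a uniform lower bound of the form $\vol(B_r(p)\cap\m)\geq\vol(\m)/\beta(2r)$, valid for every $p\in\m$ and $r<\delta$. To obtain it I would project $B_r(p)\cap\m$ onto $T_p\m$ via $\varphi_{p,\m}$, which is a diffeomorphism by Condition \eqref{condition2}, and carry out the volume computation on the tangent plane. Condition \eqref{condition1} together with $r<\delta<\reach(\m)$ controls the tilt of $\m$ and yields a Jacobian factor bounded below by $\cos^k\theta$ with $\theta=\arcsin(r/(2\delta))$. For interior points the image contains a full flat disk of radius $r\cos\theta$; the new case is a point $p\in\bm$, where the image is only a hyperspherical cap because $\bm$ curves away from its own tangent plane inside $T_p\m$. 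Using $\reach(\bm)>\delta$ from Condition \eqref{condition1} to control this bending, together with the cap-volume formula recalled in Section \ref{sechypersphecap}, yields exactly the constant appearing in the denominator of $\beta$.

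With the volume bound in hand, the remainder is a straightforward union bound. Summing $\vol(B_{\alpha/2}(p_i)\cap\m)\geq\vol(\m)/\beta(\epsilon/2)$ over the disjoint collection gives $\ell\leq\beta(\epsilon/2)$, while the probability that a single i.i.d.\ uniform sample lands in a given $B_\alpha(p_i)\cap\m$ is at least $1/\beta(\epsilon)$. Hence
\[
\Pr\bigl[\text{some }B_\alpha(p_i)\cap\m\text{ is missed by }\overline{x}\bigr]\leq \beta(\epsilon/2)\bigl(1-1/\beta(\epsilon)\bigr)^n\leq \beta(\epsilon/2)\,e^{-n/\beta(\epsilon)},
\]
and requiring this to be at most $\gamma$ produces exactly the sample-size lower bound in the statement. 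The main obstacle I anticipate is the boundary case of the volume estimate: one has to argue carefully that the image $\varphi_{p,\m}(B_r(p)\cap\m)$ truly contains a hyperspherical cap of the claimed size for $p\in\bm$, and this is where the simultaneous control of $\reach(\m)$ and $\reach(\bm)$ in Condition \eqref{condition1} becomes indispensable; away from $\bm$ the argument of Niyogi et al.\ carries over with only cosmetic changes.
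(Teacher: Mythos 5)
Your proposal follows essentially the same route as the paper: the genuinely new content is the uniform lower bound on $\vol(B_r(p)\cap\m)$ for $p$ on or near $\bm$, obtained by projecting onto $T_p\m$ and bounding the image from below by a hyperspherical cap whose size is controlled by $\reach(\bm)$ (this is the paper's Lemmas \ref{localvolest}--\ref{volest}), after which the net/coupon-collector/union-bound step is exactly what the paper imports from Lemmas 5.1 and 5.2 of Niyogi et al. The only discrepancy is bookkeeping: with your stated bound $\vol(B_r(p)\cap\m)\geq\vol(\m)/\beta(2r)$ and $\alpha=\epsilon/4$ you would get $1/\beta(\epsilon/2)$ and $\beta(\epsilon/4)$ rather than the advertised $1/\beta(\epsilon)$ and $\beta(\epsilon/2)$; the paper's $\beta$ is calibrated (via an internal halving of the radius in the proof of Lemma \ref{volest}) precisely so that the constants come out as in the statement.
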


As is mentioned in Section~\ref{sec:introduction}, compared to \cite[Theorem 4.6]{ChazalCohen-SteinerLieutier2009} and \cite[Theorem 13]{AttaliLieutierSalinas2013}, Theorem \ref{deformretrth} only deals with manifolds with boundary, but it establishes a criterion for deformation retract, which is stronger than homotopy equivalence as in \cite{ChazalCohen-SteinerLieutier2009}. The following is an example where neither \cite[Theorem 4.6]{ChazalCohen-SteinerLieutier2009} or \cite[Theorem 13]{AttaliLieutierSalinas2013} applies but Theorem \ref{deformretrth} does. 

\begin{example} 
\label{exbeyondccsl}
Let $C$ be a semi-unit-circle. Then $C$ is a manifold with boundary and its boundary consists of the two end points which are also the end points of a diameter. Let the data be the points $A_1,A_2,...,A_8$ which divide the semi-circle evenly into 7 arcs, with $A_1$ and $A_8$ being the end points. Denote $A:=\{A_1,...,A_8\}$.

We first treat this case with the result in \cite{ChazalCohen-SteinerLieutier2009}. We adopt the notations there. \cite[Theorem 4.6]{ChazalCohen-SteinerLieutier2009} is the main reconstruction theorem, and it requires the inequality 
\begin{ceqn}
\begin{align}
\label{resultccsl}
\frac{4d_H(K,K')}{\mu^2}\le \alpha<r_{\mu}(K')-3d_H(K,K')
\end{align}
\end{ceqn}
in order for the offset $K^{\alpha}$ to be homotopy equivalent to $K'$. Here we let $K'=C$ and $K=A$. It is easy to see that $d_H(K,K')=2\sin\dfrac{\pi}{28}\approx 0.223928$, and
\begin{ceqn}
\begin{align*}
 r_{\mu}(K')=r_{\mu}(C)=\begin{cases} 
      0 & \mu> 1 \\
      1 & 0<\mu\le 1. 
      \end{cases}
\end{align*}
\end{ceqn}
If $\mu>1$, then as $r_{\mu}(K')=0$, the right half of \eqref{resultccsl} does not make sense. If $0<\mu\le 1$, then 
$$\dfrac{4d_H(K,K')}{\mu^2}\ge 4d_H(K,K')=8\sin\dfrac{\pi}{28}>1-6\sin\dfrac{\pi}{28}=r_{\mu}(K')-3d_H(K,K').$$
This is a contradiction to \eqref{resultccsl}. Therefore in this case, \cite[Theorem 4.6]{ChazalCohen-SteinerLieutier2009} does not apply.

Next we try \cite[Theorem 13 and 14]{AttaliLieutierSalinas2013}. \cite[Theorem 13]{AttaliLieutierSalinas2013} requires that 
$$d_H(A,C)\le \epsilon <\lambda^{\rm cech}(\mu)r_{\mu}(C),$$
where 
$$\lambda^{\rm cech}(\mu)=\dfrac{-3\mu+3\mu^2-3+\sqrt{-8\mu^2+4\mu^3+18\mu+2\mu^4+9+\mu^6-4\mu^5}}{-7\mu^2+22\mu+\mu^4-4\mu^3+1}$$
We have $d_H(K,K')=0.223928$ as well as the value of $r_{\mu}(C)$ as is deduced above. On the other hand, from \cite[Fig. 9]{AttaliLieutierSalinas2013} we see that $\lambda^{\rm cech}(\mu)$ is increasing on (0,1], hence $\lambda^{\rm cech}(\mu)\le \lambda^{\rm cech}(1)=\dfrac{-3+\sqrt{22}}{13}\approx 0.130032$. So it is easy to see that $d_H(A,C)>\lambda^{\rm cech}(\mu)r_{\mu}(C)$ for all $\mu\in \mathbb{R}_+$. Therefore \cite[Theorem 13]{AttaliLieutierSalinas2013} does not apply. \cite[Theorem 14]{AttaliLieutierSalinas2013} requires that
$$d_H(A,C)\le \epsilon <\lambda_{n}^{\rm rips}(\mu)r_{\mu}(C).$$
But by \cite[Fig. 9]{AttaliLieutierSalinas2013}, $\lambda_{n}^{\rm rips}(\mu)$ is always smaller than $\lambda^{\rm cech}(\mu)$, so \cite[Theorem 14]{AttaliLieutierSalinas2013} does not apply either. 

Finally we try to apply Theorem \ref{deformretrth} to this case. It is easy to see that $\reach(C)=\reach(\partial C)=1$, and it also satisfies Condition \eqref{condition2} to let $\delta(C)=1$. Now let $\epsilon=0.48<\dfrac{1}{2}=\dfrac{\delta(C)}{2}$. Since $d_H(A,C)=2\sin\dfrac{\pi}{28}\approx 0.223928<0.24=\dfrac{\epsilon}{2}$ as is calculated above, $A$ is $\dfrac{\epsilon}{2}$-dense. Therefore Theorem \ref{deformretrth} applies. 
\end{example}

\section{Proofs of the main results}
\label{sec:proofs}
To prove our main results, we begin with a series of lemmas. Recall the canonical map $\pi:U\to \m$ is defined by 
$\pi(x):={\rm arg}\min_{p\in\m}\|x-p\|$. Lemmas \ref{allintubneighbor}-\ref{star} contribute to the proof of Theorem \ref{deformretrth}. 
Roughly speaking, they show that $\pi^{-1}(p)$ is star-shaped for every $p\in\m$, hence the deformation retract is well-defined. 
Lemmas \ref{localvolest}-\ref{volest} contribute to the proof of Theorem \ref{probabestim} by giving a lower bound to the volume of $\m\cap B_{\epsilon}(p)$ for every $p\in\m$ in terms of $\epsilon$. Then by directly applying results in \cite{NiyogiSmaleWeinberger2008}, we get the estimation of the number of data points as in Theorem \ref{probabestim}.
%\BW{Add an overview of how these lemmas lead to the proofs of main results.}

\begin{lemma} \label{allintubneighbor}
Choose $\delta=\delta(\m)>0$ as in Section~\ref{sec:prelim}. Then for any $\lambda\in(0,\delta)$, any point $q$ such that $\|q-\pi(q)\|<\lambda$ and any $q'\in |q\pi(q)|$, we have $\pi(q')=\pi(q)$. 
\end{lemma}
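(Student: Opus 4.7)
The plan is to exploit Condition \eqref{condition1}, namely $\delta < \reach(\m)$, which guarantees that every point within distance $\delta$ of $\m$ has a unique nearest neighbor on $\m$. Setting $p := \pi(q)$, the hypothesis $\|q - p\| < \lambda < \delta$ places $q$ inside the $\delta$-tubular neighborhood of $\m$ where $\pi$ is well-defined and single-valued. For any $q' \in |q\,\pi(q)|$ I would first observe that $\|q' - p\| \leq \|q - p\| < \delta$, so $\pi(q')$ is likewise well-defined and unique.

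The core step is a one-line triangle inequality. Parametrize $q' = (1-t)q + t p$ for some $t \in [0,1]$, so that $\|q - q'\| = t\,\|q - p\|$ and $\|q' - p\| = (1-t)\,\|q - p\|$. Letting $p^\ast := \pi(q')$ and using first the triangle inequality and then the minimality property defining $\pi(q')$,
\[
\|q - p^\ast\| \;\leq\; \|q - q'\| + \|q' - p^\ast\| \;\leq\; \|q - q'\| + \|q' - p\| \;=\; \|q - p\|.
\]
Since $\|q-p\| < \delta < \reach(\m)$, $p$ is the \emph{unique} point of $\m$ achieving the distance $\|q-p\|$ from $q$, so the above inequality forces $p^\ast = p$, which is exactly the claim $\pi(q') = \pi(q)$.

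The only potential subtlety is whether $p$ lies on $\bm$, where the geometry of admissible normal directions is more delicate than in the interior. However, the argument never invokes a normal-bundle picture: it only uses the uniqueness of the nearest-point projection in the $\delta$-tubular neighborhood, which is encoded directly in $\delta < \reach(\m)$ and makes no interior/boundary distinction. Consequently I do not anticipate any genuine obstacle; the lemma is essentially a reach-based uniqueness fact, and its role later will presumably be to ensure that the straight-line homotopy $H(q,s) = (1-s)q + s\,\pi(q)$ used in Theorem~\ref{deformretrth} stays within fibers of $\pi$ and is well-defined.
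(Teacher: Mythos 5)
Your proof is correct and is essentially the same argument as the paper's: both rest on the collinearity identity $\|q-\pi(q)\|=\|q-q'\|+\|q'-\pi(q)\|$, the minimality defining $\pi(q')$, and the uniqueness of the nearest point within the reach guaranteed by Condition \eqref{condition1}. The paper phrases it as a contradiction with a strict inequality while you argue directly and invoke uniqueness at the end, which if anything makes the role of $\delta<\reach(\m)$ slightly more explicit.
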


\begin{proof}
Suppose that this is not the case. Then 
$$\|q-\pi(q)\|=\|q-q'\|+\|q'-\pi(q)\|>\|q-q'\|+\|q'-\pi(q')\|\ge \|q-\pi(q')\|.$$
This is a contradiction to Condition \ref{condition1}. 
\end{proof}

Suppose a certain point $q \in U$ deformation retracts to a point $\pi(q) \in \m$. 
Let $|q \pi(q)|$ be the path of the deformation retract for $q$. 
Lemma \ref{allintubneighbor} tells us that along the path $|q \pi(q)|$, all the points deformation retracts to $\pi(q)$.

From now on we set $\epsilon<\dfrac{\delta}{2}$. For a point $p\in \m$ and a point $p'\in \bm$ we define $\myst_{\m}(p)$ and $\myst_{\bm}(p')$ as
\begin{ceqn}
\begin{align*}
& \myst_{\m}(p):=\bigcup_{x\in\bar{x};x\in B_{\epsilon}(p)}(B_{\epsilon}(x)\cap T_p(\m)^{\perp}). \\
& \myst_{\bm}(p'):=\bigcup_{x\in\bar{x};x\in B_{\epsilon}(p')}(B_{\epsilon}(x)\cap T_{p'}(\bm)^{\perp}).
\end{align*} 
\end{ceqn}
For convenience, we also define $\myst_{\bm}(p'):=\emptyset$ if $p' \in \m^{\circ}$. 
We present the next lemma, whose proof is exactly the same as those corresponding ones in \cite{NiyogiSmaleWeinberger2008}, although the statements are somewhat different.
\begin{lemma} [\cite{NiyogiSmaleWeinberger2008}, Proposition 4.1] \label{ststarshaped}
$\myst_{\m}(p)$ (resp.~$\myst_{\bm}(p)$) is star-shaped for any $p\in\m$ (resp.~$p\in\bm$).
\end{lemma}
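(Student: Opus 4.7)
The plan is to show that the point $p$ itself serves as a star center for $\myst_{\m}(p)$, relying only on the fact that $T_p(\m)^{\perp}$ is an affine subspace through $p$ and that Euclidean balls are convex. If no sample point lies in $B_{\epsilon}(p)$ then $\myst_{\m}(p)$ is empty and the claim is vacuous, so I may assume there exists some $x\in\bar{x}\cap B_{\epsilon}(p)$; then $p\in B_{\epsilon}(x)\cap T_p(\m)^{\perp}$ since $\|p-x\|<\epsilon$ and $p$ lies on the affine normal through itself, so $p\in\myst_{\m}(p)$.

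For star-shapedness I would pick an arbitrary $y\in\myst_{\m}(p)$ and verify that the whole segment $|py|$ lies in $\myst_{\m}(p)$. By definition of $\myst_{\m}(p)$, there exists $x\in\bar{x}\cap B_{\epsilon}(p)$ with $y\in B_{\epsilon}(x)\cap T_p(\m)^{\perp}$. Both $p$ and $y$ lie in $T_p(\m)^{\perp}$, which is affine, so $|py|\subset T_p(\m)^{\perp}$. Simultaneously $p,y\in B_{\epsilon}(x)$ (the first because $x\in B_{\epsilon}(p)$, the second by choice of $x$), and $B_{\epsilon}(x)$ is convex, so $|py|\subset B_{\epsilon}(x)$. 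Combining,
\begin{equation*}
|py|\;\subset\; B_{\epsilon}(x)\cap T_p(\m)^{\perp}\;\subset\;\myst_{\m}(p),
\end{equation*}
which is exactly the star-shaped property with center $p$.

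The assertion for $\myst_{\bm}(p)$ at $p\in\bm$ is obtained by substituting $\bm$ for $\m$ throughout: $\bm$ is itself a compact differentiable (boundaryless) submanifold, so $T_p(\bm)^{\perp}$ is a well-defined affine subspace through $p$, and the identical convexity-of-balls argument shows $\myst_{\bm}(p)$ is star-shaped at $p$. I do not anticipate a substantive obstacle here — the proof uses only the affine structure of the normal space and convexity of Euclidean balls, with no interaction between the boundary and interior strata — which is why the authors' remark that the argument is identical to the corresponding one in Niyogi--Smale--Weinberger is justified despite the boundary-adapted restatement.
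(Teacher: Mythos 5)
Your proof is correct and is essentially the same argument the paper invokes by deferring to Proposition 4.1 of Niyogi--Smale--Weinberger: each set $B_{\epsilon}(x)\cap T_p(\m)^{\perp}$ in the union is convex and contains $p$ (since $x\in B_{\epsilon}(p)$ iff $p\in B_{\epsilon}(x)$), so the union is star-shaped about $p$, and the same reasoning applies verbatim with $\bm$ in place of $\m$.
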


\begin{lemma} 
\label{pi-1contained}
$\pi^{-1}(p)\subseteq \myst_{\m}(p)$ (resp. $\pi^{-1}(p)\subseteq \myst_{\bm}(p)$) for any $p\in\m$ (resp. $p\in\bm$).
\end{lemma}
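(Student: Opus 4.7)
The claim splits into two parallel statements: $p \in \m^{\circ}$ with target $\myst_{\m}(p)$, and $p \in \bm$ with target $\myst_{\bm}(p)$. The plan is to handle both cases by the same two-step strategy.

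First, I would identify the affine slice through $p$ on which $q$ is forced to lie. Since $p = \pi(q)$ minimizes $\|q - \cdot\|$ on $\m$, first-order optimality gives a normal-cone constraint on $\overrightarrow{pq}$. For $p \in \m^{\circ}$ this forces $\overrightarrow{pq} \in T_p(\m)^{\perp}$, so $q$ lies in $p + T_p(\m)^{\perp}$. For $p \in \bm$, the half-space structure of $\m$ at a boundary point makes variations tangent to $\bm$ feasible in both directions (forcing $\langle \overrightarrow{pq}, w\rangle = 0$ for all $w \in T_p(\bm)$), while variations into $\m$ must be non-decreasing, so $\overrightarrow{pq} \in T_p(\bm)^{\perp}$ and $q$ lies in $p + T_p(\bm)^{\perp}$. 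In either case, $q$ already sits on the affine slice appearing in the relevant definition of $\myst$.

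Second, since $q \in U$, there exists $x_0 \in \bar{x}$ with $q \in B_{\epsilon}(x_0)$. The remaining task is to verify $x_0 \in B_{\epsilon}(p)$, which would immediately give $q \in B_{\epsilon}(x_0) \cap T_p(\m)^{\perp} \subseteq \myst_{\m}(p)$ (and the boundary analog). To bound $\|x_0 - p\|$, I would decompose $x_0 - p = u + v$ with $u$ tangent and $v$ normal to the relevant submanifold at $p$. Condition~\eqref{condition1} together with the standard reach estimate bounds $\|v\| \leq \|x_0 - p\|^2 / (2\delta)$. Expanding
$$\|q - x_0\|^2 = \|q - p\|^2 + \|x_0 - p\|^2 - 2\langle \overrightarrow{pq}, x_0 - p\rangle$$
and using Step 1 to reduce the inner product to $\langle \overrightarrow{pq}, v\rangle$, I would combine this with $\|q - x_0\|, \|q - p\| < \epsilon$ and $\epsilon < \delta/2$ to control $\|x_0 - p\|$.

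The main obstacle will be the boundary case. For $p \in \bm$, the natural decomposition of $x_0 - p$ is along $T_p(\bm) \oplus T_p(\bm)^{\perp}$, but a generic $x_0$ lies in $\m \setminus \bm$, so $\reach(\bm)$ does not directly control the distance of $x_0$ to $T_p(\bm)$. To get around this I would invoke Condition~\eqref{condition2}: the local diffeomorphism property of $\varphi_{p, \m}$ supplies the regularity of $\m$ near $p$ needed to carry through a curvature-style estimate analogous to the interior case. Tracking how the extra tangential direction of $\m$ (relative to $\bm$) interacts with both reach bounds is where I expect the quantitative details to require the most care.
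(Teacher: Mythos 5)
There is a genuine gap in your Step 2. You propose to show that the particular sample point $x_0$ whose ball contains $q$ automatically satisfies $x_0\in B_{\epsilon}(p)$, but this is false in general, and your own estimate already betrays it: substituting the reach bound $\|v\|\le \|x_0-p\|^2/(2\delta)$ into the expansion of $\|q-x_0\|^2$ gives at best $\|x_0-p\|^2\bigl(1-\tfrac{\epsilon}{\delta}\bigr)\le \epsilon^2$, i.e.\ $\|x_0-p\|\le\sqrt{2}\,\epsilon$ when $\epsilon<\delta/2$, which does not place $x_0$ in $B_{\epsilon}(p)$. This is not a fixable computation: on a curved $\m$, a sample point at distance slightly greater than $\epsilon$ from $p$ can still capture a point of the normal fiber over $p$ (one lying very close to $p$) inside its $\epsilon$-ball, so no argument can force the witnessing $x_0$ into $B_{\epsilon}(p)$.

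The missing ingredient is the $\tfrac{\epsilon}{2}$-density of $\bar{x}$, which your proposal never invokes and without which the statement is simply false: if no sample lies within $\epsilon$ of $p$ then $\myst_{\m}(p)=\emptyset$ while $\pi^{-1}(p)$ can still be nonempty. The paper's proof runs a dichotomy instead. Either the witnessing sample point already lies in $B_{\epsilon}(p)$, and you are done; or it does not, in which case the curvature estimate of \cite[Lemma 4.1]{NiyogiSmaleWeinberger2008} (the reach-based bound you allude to, applied to the normal displacement $q-p$, not to $x_0-p$) forces $\|q-p\|\le \epsilon^2/\delta<\epsilon/2$. Density then supplies a \emph{different} sample point $x$ with $\|x-p\|\le \epsilon/2$, and the triangle inequality gives $\|q-x\|\le \epsilon^2/\delta+\epsilon/2<\epsilon$, so $q\in B_{\epsilon}(x)\cap T_p(\bm)^{\perp}$ with $x\in B_{\epsilon}(p)$, as required. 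Your Step 1 (the normal-cone argument placing $q$ in $p+T_p(\bm)^{\perp}$ at boundary points) is sound and is indeed what makes membership in $\myst_{\bm}(p)$ meaningful, and this is also why the worry about $\reach(\bm)$ versus $\reach(\m)$ is not the real bottleneck; the essential repair is the case split together with the density hypothesis.
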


\begin{proof}
If $p \in  \m^{\circ}$, 
this is already proven in \cite[Proposition 4.2]{NiyogiSmaleWeinberger2008}. So for the rest of the proof we assume that $p\in\bm$.

Let $v$ be an arbitrary point in $\pi^{-1}(p)$. By the definition of $\myst_{\bm}$, we only need to consider the case where there is a point $q\in \bar{x}$ such that $q\notin B_{\epsilon}(p)$ and $v\in B_{\epsilon}(q)$. In this case, the distance between $v$ and $p$ is at most $\dfrac{\epsilon^2}{\delta}$, and the proof is exactly the same as that of \cite[Lemma 4.1]{NiyogiSmaleWeinberger2008}. Now by the $\dfrac{\epsilon}{2}$-dense condition, there is a point $x\in\overline{x}$ such that $\|x-p\|\le \dfrac{\epsilon}{2}$. Therefore 
$$\|v-x\|\le \|v-p\|+\|p-x\|\le \dfrac{\epsilon^2}{\delta}+\dfrac{\epsilon}{2}< \dfrac{\epsilon}{2}+\dfrac{\epsilon}{2}=\epsilon.$$
\end{proof}
\begin{figure}[!ht]
\begin{center}
\begin{tabular}{c}
\includegraphics[width=0.45\linewidth]{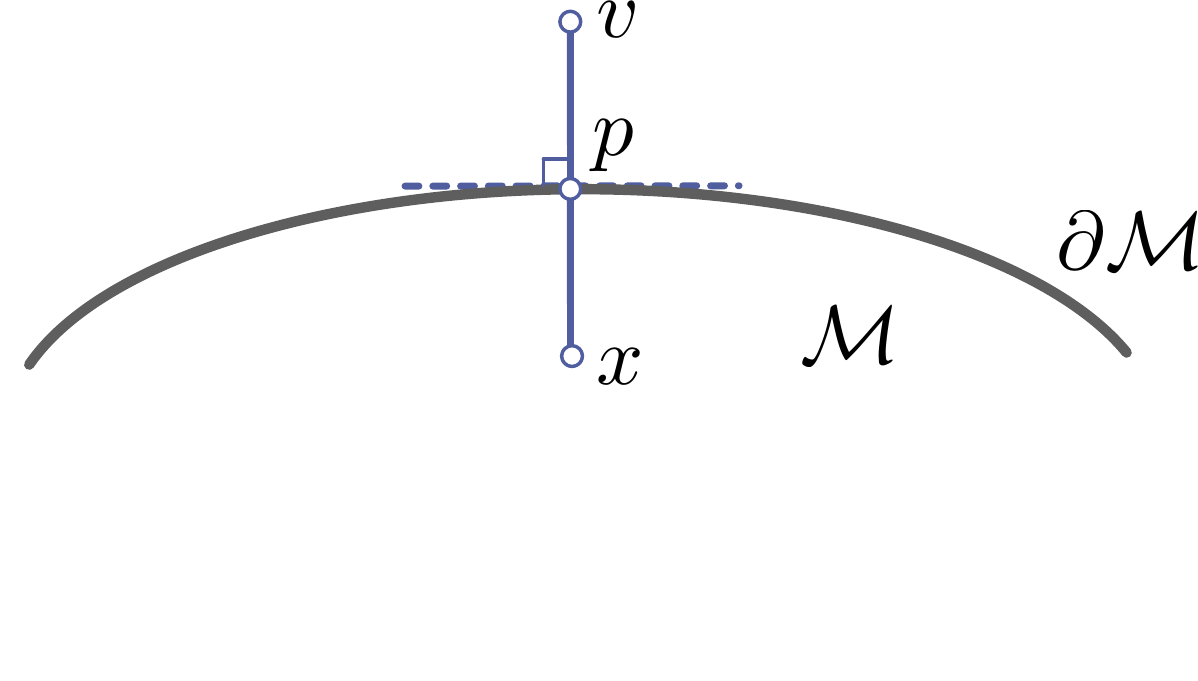} 
\end{tabular}
\vspace{-12mm}
\caption{The worse case for Lemma~\ref{pi-1contained}, where $||v-p|| = ||p-x|| = \frac{\epsilon}{2}$.}
\label{fig:lemma7}
\end{center}
\end{figure}
\begin{remark}
We note that the worst case in Lemma \ref{pi-1contained} can happen when $p$ lies on the boundary of $\m$. This is illustrated in Fig.~\ref{fig:lemma7}, where $M$ is a plane region with boundary, $v$ is also inside the plane, and $|vp|$ and $|xv|$ are both perpendicular to $\bm$.
This is why we require that $\epsilon<\dfrac{\delta}{2}$, which is more restrictive than the requirement $\epsilon<\sqrt{\dfrac{5}{3}}\delta$ in \cite{NiyogiSmaleWeinberger2008}.
\end{remark}
\begin{lemma} \label{star}
For any $p\in\m$, $\pi^{-1}(p)$ is star-shaped with respect to $p$.
\end{lemma}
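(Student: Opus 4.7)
The plan is to show that for any $q \in \pi^{-1}(p)$, the entire segment $|pq|$ lies in $\pi^{-1}(p)$, by combining the three preceding lemmas in a single sweep. The two ingredients I need are (i) that the segment $|pq|$ lies in the domain $U$ where $\pi$ is defined, and (ii) that $\pi$ sends every point on that segment to $p$.

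For ingredient (i), I would invoke Lemma \ref{pi-1contained} to place $q$ inside $\myst_{\m}(p)$ when $p \in \m^{\circ}$, or inside $\myst_{\bm}(p)$ when $p \in \bm$. Then by Lemma \ref{ststarshaped} the relevant star $\myst_{\m}(p)$ (resp.\ $\myst_{\bm}(p)$) is star-shaped with respect to $p$, so the segment $|pq|$ lies entirely inside that star, which by construction is a union of balls $B_{\epsilon}(x)$ with $x \in \bar{x}$. Hence $|pq| \subseteq U$, so $\pi$ is defined along the segment.

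For ingredient (ii), I would first bound $\|q - p\|$: since $q$ lies in some $B_{\epsilon}(x) \cap T_p(\m)^{\perp}$ with $\|x - p\| < \epsilon$, we have $\|q - p\| \le 2\epsilon < \delta$ by the hypothesis $\epsilon < \delta/2$. Now Lemma \ref{allintubneighbor} applies with, say, $\lambda = 2\epsilon$ and with $q'$ any point of the segment $|q\, \pi(q)| = |qp|$, giving $\pi(q') = \pi(q) = p$. Combining with (i), every $q' \in |pq|$ lies in $\pi^{-1}(p)$, which is exactly the star-shapedness claim.

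The only real subtlety I anticipate is making sure the correct version of Lemmas \ref{ststarshaped} and \ref{pi-1contained} is used depending on whether $p$ lies in the interior or on the boundary of $\m$; the boundary case needs $\myst_{\bm}(p) \subseteq T_p(\bm)^{\perp}$, which is a larger affine subspace than $T_p(\m)^{\perp}$, but Lemma \ref{pi-1contained} has already been stated for that case, so no new geometric argument is required here. The rest is a direct concatenation of the earlier lemmas, so I do not expect a serious obstacle beyond being careful about the two cases.
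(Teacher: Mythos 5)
Your proposal is correct and follows essentially the same route as the paper: Lemma \ref{pi-1contained} places $q$ in the star, Lemma \ref{ststarshaped} keeps the segment $|pq|$ inside $U$, and Lemma \ref{allintubneighbor} propagates $\pi(q')=p$ along the segment (the paper phrases this via rays in $T_p(\bm)^{\perp}$ and disposes of the interior case by citing the equality $\pi^{-1}(p)=\myst_{\m}(p)$, but the logic is the same). The only cosmetic point is that your bound $\|q-p\|\le 2\epsilon$ should be the strict inequality $\|q-p\|<2\epsilon$ (which holds since the balls are open) so that Lemma \ref{allintubneighbor} applies with $\lambda=2\epsilon<\delta$.
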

\begin{proof}
If $p \in \m^{\circ}$, 
we have by Lemma \ref{ststarshaped} and \ref{pi-1contained} that $\pi^{-1}(p)=st_{\m}(p)$ which is star-shaped. So for the rest of the proof we assume that $p\in\bm$. By Lemma \ref{ststarshaped}, $st_{\bm}(p)$ is star-shaped; by Lemma \ref{allintubneighbor}, for any ray $l\subset T_p(\bm)^{\perp}$ starting from $p$, either $\pi^{-1}(p)\cap l=\{p\}$ or $st_{\bm}(p)\cap l\subseteq \pi^{-1}(p)\cap l$; by Lemma \ref{pi-1contained} we know that $\pi^{-1}(p)\subseteq st_{\bm}(p)$. So if $q\in\pi^{-1}(p)$, then $|pq|\subseteq\pi^{-1}(p)$.
\end{proof}

Then Theorem~\ref{deformretrth} generalizes \cite[Proposition 3.1]{NiyogiSmaleWeinberger2008} to compact manifolds with boundaries. 
Its proof is as follows. 

\begin{proof}
We define the deformation retract $F(x,t): U\times [0,1]\to U$ as $F(x,t)=tx+(1-t)\pi(x)$. By Lemma \ref{allintubneighbor} and Lemma  \ref{star}, this deformation retract is well-defined. Moreover since $U$ is contained in $\m^{\delta}$, there is no critical point for distance functions, we get that $U$ deformation retracts to $\m$.
\end{proof}
\begin{lemma} \label{localvolest}
Let $p\in\bm$ and $A=\m\cap B_{\epsilon}(p)$, where $\epsilon\in(0,\delta)$. Then 
$$\vol(A)> \frac{\cos^k\theta}{2}I_{1-\frac{\epsilon^2\cos^2\theta}{4\delta^2}}(\frac{k+1}{2}, \frac{1}{2})\vol(B_{\epsilon}^k(p)),$$
where $I$ is the regularized incomplete beta function, $B_{\epsilon}^k(p)$ is the $k$-dimensional ball in $T_p$ centered at $p$, and $\theta=\arcsin(\dfrac{\epsilon}{2\delta})$.
\end{lemma}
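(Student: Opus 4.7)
My plan is to find a subset $S$ of the tangent space $T_p(\m)$ whose $\varphi_{p,\m}$-preimage lies inside $A$, and then recognize $S$ as a hyperspherical cap whose volume matches the right-hand side via the formula recalled in Section~\ref{sechypersphecap}. Set $r := \epsilon\cos\theta$, and decompose $T_p(\m) = T_p(\bm) \oplus \mathbb{R}\nu$, where $\nu$ is the unit inward normal to $\bm$ in $\m$ at $p$. By Condition~\eqref{condition2}, $\varphi := \varphi_{p,\m}$ is a diffeomorphism on $B_{\delta}(p) \cap \m \supseteq A$, and since orthogonal projection onto a linear subspace is $1$-Lipschitz we have $\vol(A) \geq \vol(\varphi(A))$.

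I would then use the reach of $\m$ and of $\bm$ to locate $\varphi(A)$ and $\varphi(\bm)$ inside $T_p(\m)$. A standard Federer-type estimate from $\reach(\m) > \delta$ gives that for $q \in \m$ with $\|q-p\|\leq \delta$ the secant $\overrightarrow{pq}$ meets $T_p(\m)$ at angle at most $\arcsin(\|q-p\|/(2\delta)) \leq \theta$. Consequently every $y \in T_p(\m)$ with $\|y-p\| \leq r$ has its preimage $\varphi^{-1}(y)$ inside $B_\epsilon(p) \cap \m$. The analogous estimate applied to the submanifold $\bm$ shows that $\varphi(\bm)$ is locally a graph over $T_p(\bm)$ whose displacement in the $\nu$-direction is at most $r^2/(2\delta)$ throughout the region $\|y_\parallel\| \leq r$, where $y_\parallel$ denotes the $T_p(\bm)$-component of $y - p$.

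Setting $d := r^2/(2\delta)$, the region
$$S := \bigl\{\,y \in T_p(\m) \,:\, \|y - p\| \leq r,\ \langle y - p, \nu\rangle \geq d\,\bigr\}$$
then lies strictly above $\varphi(\bm)$, so $\varphi^{-1}(S) \subseteq A$ and $\vol(A) \geq \vol(S)$. Geometrically $S$ is a hyperspherical cap of the $k$-ball of radius $r$ cut by a hyperplane at distance $d$ from the center, so the Section~\ref{sechypersphecap} formula yields
$$\vol(S) = \frac{1}{2}\,I_{1 - d^2/r^2}\!\left(\frac{k+1}{2},\frac{1}{2}\right)\vol(B_r^k).$$
Plugging in $\vol(B_r^k) = \cos^k\theta \cdot \vol(B_\epsilon^k(p))$ and $d^2/r^2 = \epsilon^2\cos^2\theta/(4\delta^2)$ recovers the stated inequality.

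The main obstacle I expect is the second reach estimate: the Federer angle bound for $\bm$ controls the full perpendicular component of $q-p$ to $T_p(\bm)$ inside $\mathbb{R}^N$, whereas what I actually need is the bound on its projection onto the one-dimensional direction $\nu$. Some care is required because $(T_p\bm)^\perp \subset \mathbb{R}^N$ contains $\nu$ together with the ambient normal directions of $\m$. I would handle this by first bounding $\|q - p\|$ by $r/\cos\theta = \epsilon$ using the reach of $\m$, and then observing that the $\nu$-coordinate of $\varphi(q) - p$ is a sub-projection of the perpendicular component of $q-p$ to $T_p(\bm)$, so the desired quadratic bound $r^2/(2\delta)$ follows up to harmless higher-order corrections absorbed by the strict inequality.
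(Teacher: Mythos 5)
Your proposal follows essentially the same route as the paper's proof: project $A$ onto $T_p(\m)$, use $\reach(\m)$ to show that $\partial B_{\epsilon}(p)\cap\m$ projects outside $B_{\epsilon\cos\theta}^k(p)$ and $\reach(\bm)$ to keep $\varphi_{p,\m}(\bm)$ out of the cap, and lower-bound $\vol(A)$ by the volume of that cap via the formula of Section~\ref{sechypersphecap} --- indeed your region $S$ cut at height $d=r^2/(2\delta)$ is exactly the cap of $B_{\epsilon\cos\theta}^k(p)\cap B_{\delta}^k(p')$ that the paper retains, $p'$ being the point at distance $\delta$ from $p$ along $\nu$. The one imprecision is your claim that the $\nu$-displacement of $\varphi_{p,\m}(\bm)$ is at most $r^2/(2\delta)$ throughout $\|y_\parallel\|\le r$: the reach estimate actually gives the circular profile $\delta-\sqrt{\delta^2-\|y_\parallel\|^2}$, which exceeds $r^2/(2\delta)$ near $\|y_\parallel\|=r$; this is harmless only because $S$ sits over the smaller base disk of radius $\sqrt{r^2-d^2}$, a point the paper's formulation (``$\varphi_{p,\m}(q)$ lies outside $B_{\delta}^k(p')$'') handles automatically.
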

\begin{proof}
Let $p'\in T_p(\m)$ be the point such that $\overrightarrow{pp'}$ is perpendicular to $T_p(\bm)$ and points to the inside of $\m$, and $\|p-p'\|=\delta$. We first want to show that 
\begin{ceqn}
\begin{align} \label{inclintersball}
B_{\epsilon\cos\theta}^k(p)\cap B_{\delta}^k(p')\subseteq \varphi_{p,\m}(A).
\end{align}
\end{ceqn}
By Condition \eqref{condition2}, $\varphi_{p, \m}|_{A}$ is a homeomorphism onto its image. In particular, 
$$\varphi_{p,\m}(\partial A)=\partial (\varphi_{p,\m}(A)).$$ 
It is easy to see that $B_{s}^k(p)\cap B_{\delta}^k(p')\cap \varphi_{p,\m}(A)\ne \emptyset$ for any $s>0$. Let $q\in\m$ be a point on the boundary of $A$. Then $q\in \bm\cup\partial B_{\epsilon}(p)$. We prove \eqref{inclintersball} by proving the claim that no matter whether $q\in \bm$ or $q\in \partial B_{\epsilon}(p)$, $\varphi_{p,\m}(q)$ is outside $B_{\epsilon\cos\theta}^k(p)\cap B_{\delta}^k(p')$. Indeed, if there exists a point $q'\in B_{\epsilon\cos\theta}^k(p)\cap B_{\delta}^k(p')$ such that $q'\not\in \varphi_{p,\m}(A)$, we choose a point $o\in B_{\epsilon\cos\theta}^k(p)\cap B_{\delta}^k(p')\cap \varphi_{p,\m}(A)$. Then the line segment connecting $o$ and $q'$ must intersect with the $\partial(\varphi_{p,\m}(A))$. But on the other hand, by the convexity of $B_{\epsilon\cos\theta}^k(p)\cap B_{\delta}^k(p')$, any intersection point is inside $B_{\epsilon\cos\theta}^k(p)\cap B_{\delta}^k(p')$. This is a contradiction.

To prove the claim, we first suppose that $q\in \bm$. Let $p''$ be the point where $\overrightarrow{pp''}$ is in the same direction with $\overrightarrow{\varphi_{p,\bm}(q)q}$ and $\|p''-p\|=\delta$ (if $\varphi_{p,\bm}(q)=q$, then set $p''=p'$). By Condition \eqref{condition1}, $q$ is outside $B_{\delta}(p'')$. Now $q$, $\varphi_{p,\m}(q)$ and $\varphi_{p,\bm}(q)$ form a right-angled triangle where $|q\varphi_{p,\bm}(q)|$ is the hypotenuse, so $\|\varphi_{p,\m}(q)-\varphi_{p,\bm}(q)\|\le \|q-\varphi_{p,\bm}(q)\|$. Therefore $\varphi_{p,\m}(q)$ is certainly outside $B_{\delta}^k(p')$. This case is illustrated in Fig.~\ref{fig:lemma9a}. 

\begin{figure}[!ht]
\begin{center}
\begin{tabular}{c}
\includegraphics[width=0.5\linewidth]{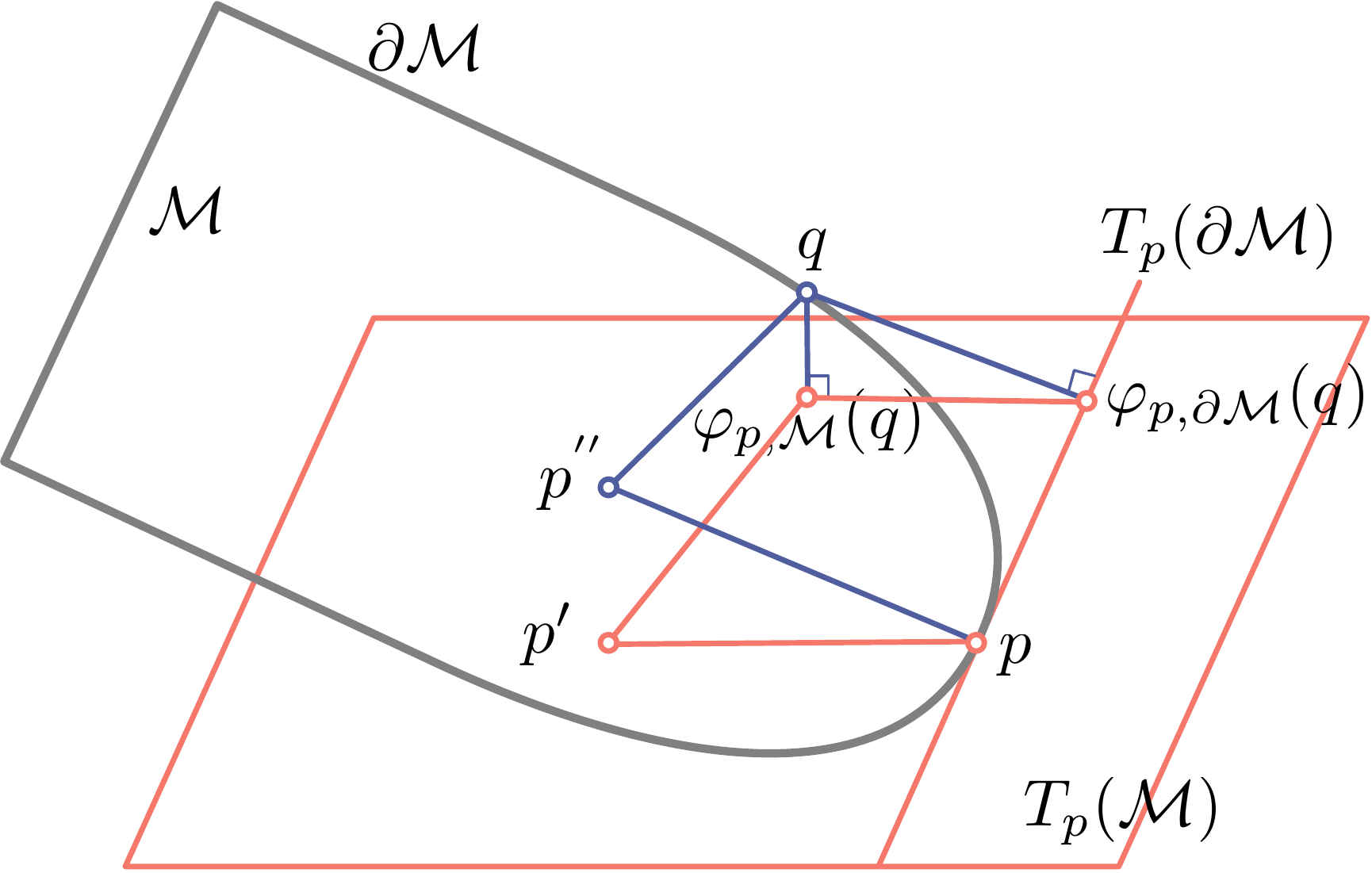} 
\end{tabular}
\vspace{-2mm}
\caption{An illustration for Lemma~\ref{localvolest} for the case when $q \in \partial \m$, where $||p' - p|| = \delta$, $||p'' - p|| = \delta$, $||p^{''} - q|| > \delta$, $||q - \varphi_{p, \partial \m}(q)|| \leq \delta$, therefore $||p' - \varphi_{p, \m}(q)|| \geq ||p^{''} - q|| > \delta$.}
\label{fig:lemma9a}
\end{center}
\end{figure}

Next suppose that $q\in \partial B_{\epsilon}(p)$, which implies that $\|q-p\|=\epsilon$. We have that $\|\varphi_{p,\m}(q)-p\|=\epsilon\cos(\phi)$ where $\phi$ is the (smallest nonnegative) angle between $\overrightarrow{pq}$ and $\overrightarrow{p\varphi_{p,\m}(q)}$. Let $p''$ be the point where $\overrightarrow{pp''}$ is in the same direction with $\overrightarrow{\varphi_{p,\m}(q)q}$ (if $\varphi_{p,\m}(q)=q$, then choose an arbitrary $p''$ such that $pp''\bot T_{p}(\m)$) and $\|p''-p\|=\delta$. Then by Condition \eqref{condition1}, $\|p''-q\|>\delta$. So by the definition of $\theta$ we see that $\phi<\theta$, and $\|\varphi_{p,\m}(q)-p\| = \epsilon\cos (\phi) > \epsilon\cos(\theta)$. Hence $\varphi_{p,\m}(q)$ is outside $B_{\epsilon\cos\theta}^k(p)$. This is illustrated in Fig.~\ref{fig:lemma9b}.

\begin{figure}[!ht]
\begin{center}
\begin{tabular}{c}
\includegraphics[width=0.5\linewidth]{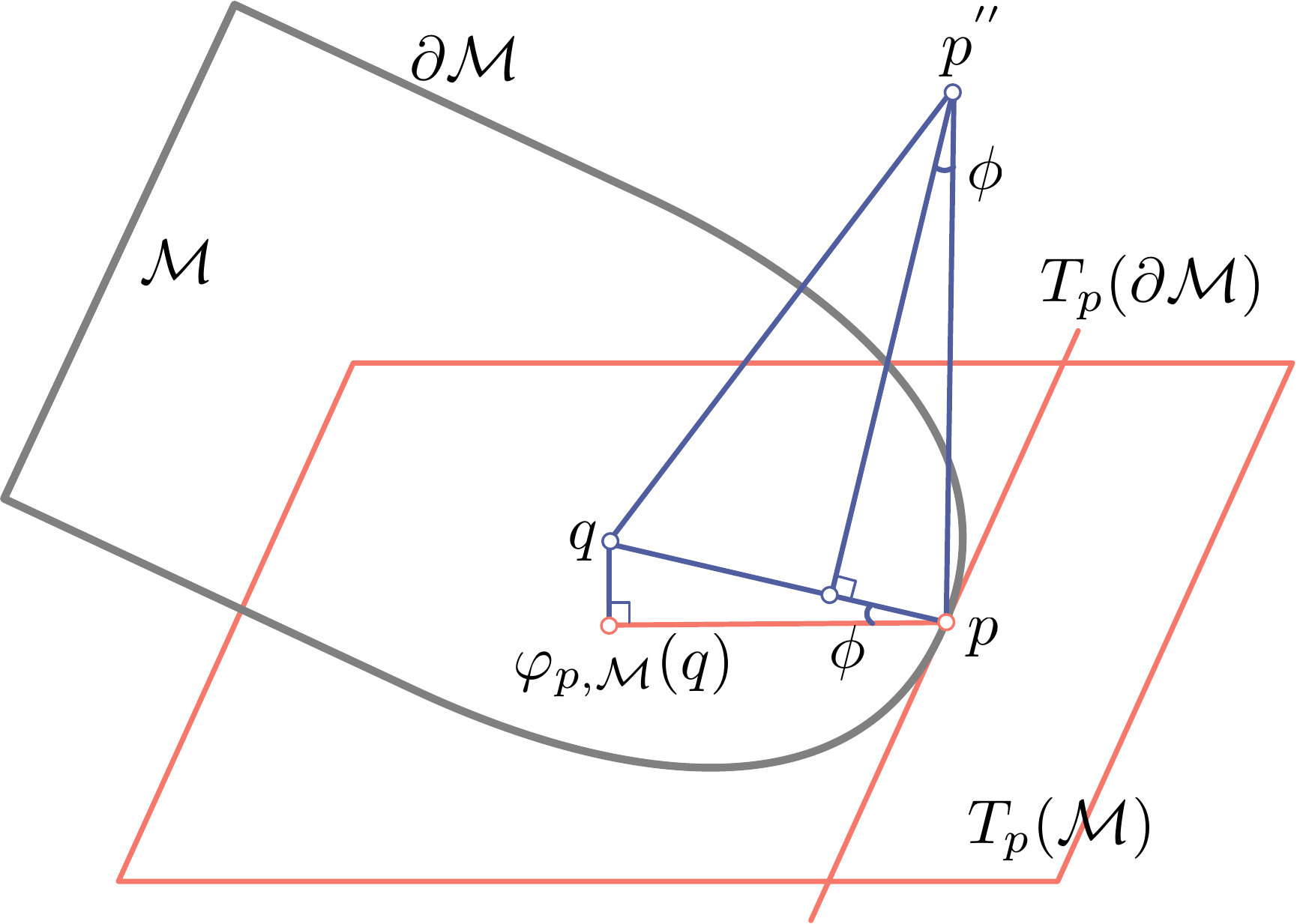} 
\end{tabular}
\vspace{-2mm}
\caption{An illustration for Lemma~\ref{localvolest} for the case when $q \in \partial B_{\epsilon}(p)$, where $||p-q|| = \epsilon$, 
$\phi < \theta$, $|p^{''} - p|| = \delta$ and $||p^{''} - q|| > \delta$.}
\label{fig:lemma9b}
\end{center}
\end{figure}

So we have $\vol(A)\ge \vol(B_{\epsilon\cos\theta}^k(p)\cap B_{\delta}^k(p'))$. The right-hand side consists of two hyperspherical caps. For convenience we choose the lower bound of the right side to be the hyperspherical cap that belongs to $B_{\epsilon\cos\theta}^k(p)$. By \ref{sechypersphecap}, the volume of this hyperspherical cap is $\displaystyle\frac{1}{2}I_{1-\frac{\epsilon^2\cos^2\theta}{4\delta^2}}(\frac{k+1}{2}, \frac{1}{2})\vol(B_{\epsilon\cos\theta}^k(p))$. Moreover we know that $\vol(B_{\epsilon\cos\theta}^k(p))=\cos^k\theta\vol(B_{\epsilon}^k(p))$, so we are done.
\end{proof}
Using the same argument as in the third paragraph of the proof of the last lemma, we actually have 
\begin{lemma} \label{estnointersectionwithbound}
Let $p\in\m$ and $\epsilon>0$ such that $\bm\cap B_{\epsilon}(p)=\emptyset$. Let $A:=\m\cap B_{\epsilon}(p)$. Then 
$${\rm vol}(A)\ge \vol(B_{\epsilon\cos\theta}^k(p))=\cos^k\theta\vol(B_{\epsilon}^k(p)).$$
\end{lemma}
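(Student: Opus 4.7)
The plan is to mimic the structure of the proof of Lemma \ref{localvolest}, but exploit the hypothesis $\bm\cap B_{\epsilon}(p)=\emptyset$ to eliminate the "boundary case" that forced the intersection with $B_{\delta}^k(p')$ there. Concretely, I would first observe that since no point of $\bm$ lies inside $B_{\epsilon}(p)$, any point of the topological boundary $\partial A$ of $A$ inside $\m$ must in fact lie on the sphere $\partial B_{\epsilon}(p)$. Thus every $q\in\partial A$ satisfies $\|q-p\|=\epsilon$, and the case $q\in\bm$ from the proof of Lemma \ref{localvolest} is vacuous.

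Next I would invoke directly the argument of the third paragraph of the proof of Lemma \ref{localvolest} (the "$q\in\partial B_{\epsilon}(p)$" case): for any such $q$, if $\phi$ is the angle between $\overrightarrow{pq}$ and $\overrightarrow{p\varphi_{p,\m}(q)}$, then Condition \eqref{condition1} forces $\phi<\theta$, so $\|\varphi_{p,\m}(q)-p\|=\epsilon\cos\phi>\epsilon\cos\theta$. Hence $\varphi_{p,\m}(\partial A)$ lies entirely outside $B_{\epsilon\cos\theta}^k(p)$. By Condition \eqref{condition2}, $\varphi_{p,\m}|_A$ is a diffeomorphism onto its image, so $\partial(\varphi_{p,\m}(A))=\varphi_{p,\m}(\partial A)$, and the same convexity/connectedness argument used in Lemma \ref{localvolest} (taking a segment from any putative missing point to an interior point of the image and noting it must cross $\partial(\varphi_{p,\m}(A))$) yields
\begin{ceqn}
\begin{align*}
B_{\epsilon\cos\theta}^k(p)\subseteq \varphi_{p,\m}(A).
\end{align*}
\end{ceqn}

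Finally I would use the fact that orthogonal projection onto a tangent plane is $1$-Lipschitz and hence volume-nonincreasing on a submanifold: the Jacobian of $\varphi_{p,\m}|_A$ has operator norm at most $1$, so $\vol(A)\ge \vol(\varphi_{p,\m}(A))\ge \vol(B_{\epsilon\cos\theta}^k(p))$. The last equality $\vol(B_{\epsilon\cos\theta}^k(p))=\cos^k\theta\,\vol(B_{\epsilon}^k(p))$ is the standard scaling formula for the volume of a $k$-dimensional Euclidean ball and requires no further comment.

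The only mildly delicate step is the volume-nonincreasing property of $\varphi_{p,\m}|_A$: it is essentially the statement that the projection of a $k$-submanifold of $\Rspace^N$ onto a $k$-plane is a contraction on tangent vectors, which is classical but should be mentioned explicitly. I do not expect any substantive obstacle, since the bulk of the geometric work has already been carried out inside Lemma \ref{localvolest}; the present lemma is really the simpler, "interior-only" cousin of that result.
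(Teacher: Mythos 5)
Your proposal is correct and follows essentially the same route as the paper, which proves this lemma simply by citing ``the same argument as in the third paragraph of the proof of the last lemma'': with $\bm\cap B_{\epsilon}(p)=\emptyset$ only the $q\in\partial B_{\epsilon}(p)$ case survives, giving $B_{\epsilon\cos\theta}^k(p)\subseteq\varphi_{p,\m}(A)$ and hence the volume bound. Your explicit remark that $\varphi_{p,\m}$ is volume-nonincreasing is a point the paper leaves implicit (it is also tacitly used in Lemma \ref{localvolest}), but it introduces no new idea.
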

Combining Lemma \ref{localvolest} and \ref{estnointersectionwithbound}, we have
\begin{lemma} \label{volest}
Let $p\in\m$ and $A=\m\cap B_{\epsilon}(p)$, where $\epsilon\in(0,\delta)$. Then 
$${\rm vol}(A)\ge\frac{\cos^k\theta'}{2^{k+1}}I_{1-\frac{\epsilon^2\cos^2\theta'}{16\delta^2}}(\frac{k+1}{2}, \frac{1}{2})\vol(B_{\epsilon}^k(p)),$$
where $\theta':=\arcsin(\dfrac{\epsilon}{4\delta})$.
\end{lemma}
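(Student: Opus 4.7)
My plan is to split into two cases based on whether $\bm$ already intersects the \emph{half}-radius ball $B_{\epsilon/2}(p)$, and then reduce each case to one of the two preceding lemmas applied at radius $\epsilon/2$. This radius-halving is exactly what produces the $\theta' = \arcsin(\epsilon/(4\delta))$ and the extra factor of $2^{-k}$ in the conclusion, as opposed to the $\theta = \arcsin(\epsilon/(2\delta))$ appearing in Lemma~\ref{localvolest}.

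\textbf{Case 1:} $\bm \cap B_{\epsilon/2}(p) = \emptyset$. Lemma~\ref{estnointersectionwithbound} applied to the ball $B_{\epsilon/2}(p)$ yields
\[
\vol(\m\cap B_{\epsilon/2}(p)) \ge \cos^k\theta' \cdot \vol(B_{\epsilon/2}^k(p)) = \frac{\cos^k\theta'}{2^k}\vol(B_\epsilon^k(p)),
\]
because the relevant angle for a ball of radius $\epsilon/2$ is exactly $\theta'$. Since $\m\cap B_{\epsilon/2}(p) \subseteq A$ and the incomplete beta factor in the target bound is bounded by $1$, this already dominates the desired inequality.

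\textbf{Case 2:} $\bm\cap B_{\epsilon/2}(p)\ne\emptyset$. Pick $p'\in\bm$ with $\|p-p'\| < \epsilon/2$, or take $p'=p$ if $p\in\bm$. The triangle inequality gives $B_{\epsilon/2}(p') \subseteq B_\epsilon(p)$, so $\m\cap B_{\epsilon/2}(p') \subseteq A$. Applying Lemma~\ref{localvolest} at $p'\in\bm$ with radius $\epsilon/2 < \delta$ and then rescaling $\vol(B_{\epsilon/2}^k(p')) = 2^{-k}\vol(B_\epsilon^k(p))$ produces precisely the claimed bound.

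The only delicate step is verifying the parameter substitution inside Lemma~\ref{localvolest}: with radius $r = \epsilon/2$, the angle becomes $\arcsin((\epsilon/2)/(2\delta)) = \theta'$, and the beta-function argument $1 - r^2\cos^2\theta/(4\delta^2)$ becomes $1 - \epsilon^2\cos^2\theta'/(16\delta^2)$, matching the target. Everything else is routine bookkeeping.
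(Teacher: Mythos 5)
Your proof is correct and follows essentially the same route as the paper: split on whether $\bm$ meets the half-radius ball, apply Lemma~\ref{estnointersectionwithbound} at radius $\epsilon/2$ in the first case, and apply Lemma~\ref{localvolest} at a nearby boundary point $p'$ with $B_{\epsilon/2}(p')\subseteq B_\epsilon(p)$ in the second. Your explicit remark that the regularized incomplete beta factor is at most $1$ (so Case~1 dominates the stated bound) is a detail the paper leaves implicit, but the argument is the same.
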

\begin{proof}
If $d(p,\bm)>\dfrac{\epsilon}{2}$, then $\bm\cap B_{\frac{\epsilon}{2}}(p)=\emptyset$. So by Lemma \ref{estnointersectionwithbound}, 
$${\rm vol}(A)\ge \cos^k\theta'\vol(B_{\frac{\epsilon}{2}}^k(p))$$
and we are done. If $d(p,\bm)\le \dfrac{\epsilon}{2}$, let $p'$ be a point on $\bm$ that has minimum distance from $p$. Then $B_{\frac{\epsilon}{2}}(p')\subset B_{\epsilon}(p)$. So 
\begin{align*}
{\rm vol}(A)\ge{\rm vol}(B_{\frac{\epsilon}{2}}(p')\cap \m)& \ge \frac{\cos^k\theta'}{2}I_{1-\frac{\epsilon^2\cos^2\theta'}{16\delta^2}}(\frac{k+1}{2}, \frac{1}{2})\vol(B_{\frac{\epsilon}{2}}^k(p)) \\
& =\frac{\cos^k\theta'}{2^{k+1}}I_{1-\frac{\epsilon^2\cos^2\theta'}{16\delta^2}}(\frac{k+1}{2}, \frac{1}{2})\vol(B_{\epsilon}^k(p)),
\end{align*}
where the last inequality is by Lemma \ref{localvolest}.
\end{proof}
We observe that the right side of the inequality in Lemma \ref{volest} is $\dfrac{\vol(\m)}{\beta(\epsilon)}$, where the function $\beta$ is defined in Theorem~\ref{main} (note that the $\theta'$ here corresponds to the $\theta$ in Theroem~\ref{main}). By \cite[Lemma 5.1 and 5.2]{NiyogiSmaleWeinberger2008}, a satisfactory number of points $|\overline{x}|$ as in Theorem~\ref{probabestim} is of the form $\dfrac{1}{\alpha}(\ln l+\ln \dfrac{1}{\gamma})$, where $\alpha$ is any lower bound of $\dfrac{\vol(A)}{\vol(\m)}$ and $l$ is any upper bound of $\dfrac{\epsilon}{2}$-packing-number. 
So by Lemma \ref{volest}, it suffices to take $\alpha$ and $l$ to be $\dfrac{1}{\beta(\epsilon)}$ and $\beta(\dfrac{\epsilon}{2})$, respectively. Therefore we obtain Theorem~\ref{probabestim}.
Finally, combining Theorems \ref{deformretrth} and \ref{probabestim}, we arrive at Theorem~\ref{main}.

\section{Experiments} 
\label{sec:experiments}
In this section, we work on two typical examples of manifolds with boundary. The first example is a cylindrical surface, referred to as the \emph{cylinder} dataset,  which has radius $1$ and height $1$. More precisely, it can be expressed as
$$\{(x,y,z)\in \mathbb{R}^3: x^2+y^2=1, z\in [0,1]\}.$$

The second example is a torus with a cap chopped off, referred to as the \emph{torus} dataset. In $\mathbb{R}^3$, it can be expressed as the torus with inner circle $x^2+y^2=1$ and the outer circle $x^2+y^2=9$, and the part with $x\ge 2$ is chopped off.

\para{Sampling parameters.}
As stated in the main Theorem~\ref{main}, the lower bound of sampling that guarantees deformation retraction with probability $1-\gamma$ can be expressed as $$n^* = \beta(\epsilon)(\ln \beta(\frac{\epsilon}{2})+\ln(\frac{1}{\gamma})),$$
where $\displaystyle\beta(x):=\frac{\vol(\m)}{\frac{\cos^k\theta}{2^{k+1}}I_{1-\frac{x^2\cos^2\theta}{16\delta^2}}(\frac{k+1}{2}, \frac{1}{2})\vol(B_{x}^k)}$ and $\theta=\arcsin(\frac{x}{4\delta})$.

For \emph{cylinder}, $\vol(\m) = 2\pi$, $k=2$, $\vol(B_{x}^2) = \pi x^2$, $\delta = 1$. 
For instance, setting $\epsilon=0.49$, $\gamma = 0.1$ gives rise to $n^* = 638$, as illustrated in Fig.~\ref{fig:data} (a). 

For \emph{torus}, $\vol(\m) = (8-0.522) \cdot 2 \pi$, $k=2$, $\vol(B_{x}^2) = \pi x^2$, $\delta = 1$. 
For instance, setting $\epsilon=0.49$, $\gamma = 0.1$ gives rise to $n^* = 9809$, as illustrated in Fig.~\ref{fig:data} (b). 

\begin{figure}[!ht]
\begin{center}
\begin{tabular}{cc}
\includegraphics[width=0.45\linewidth]{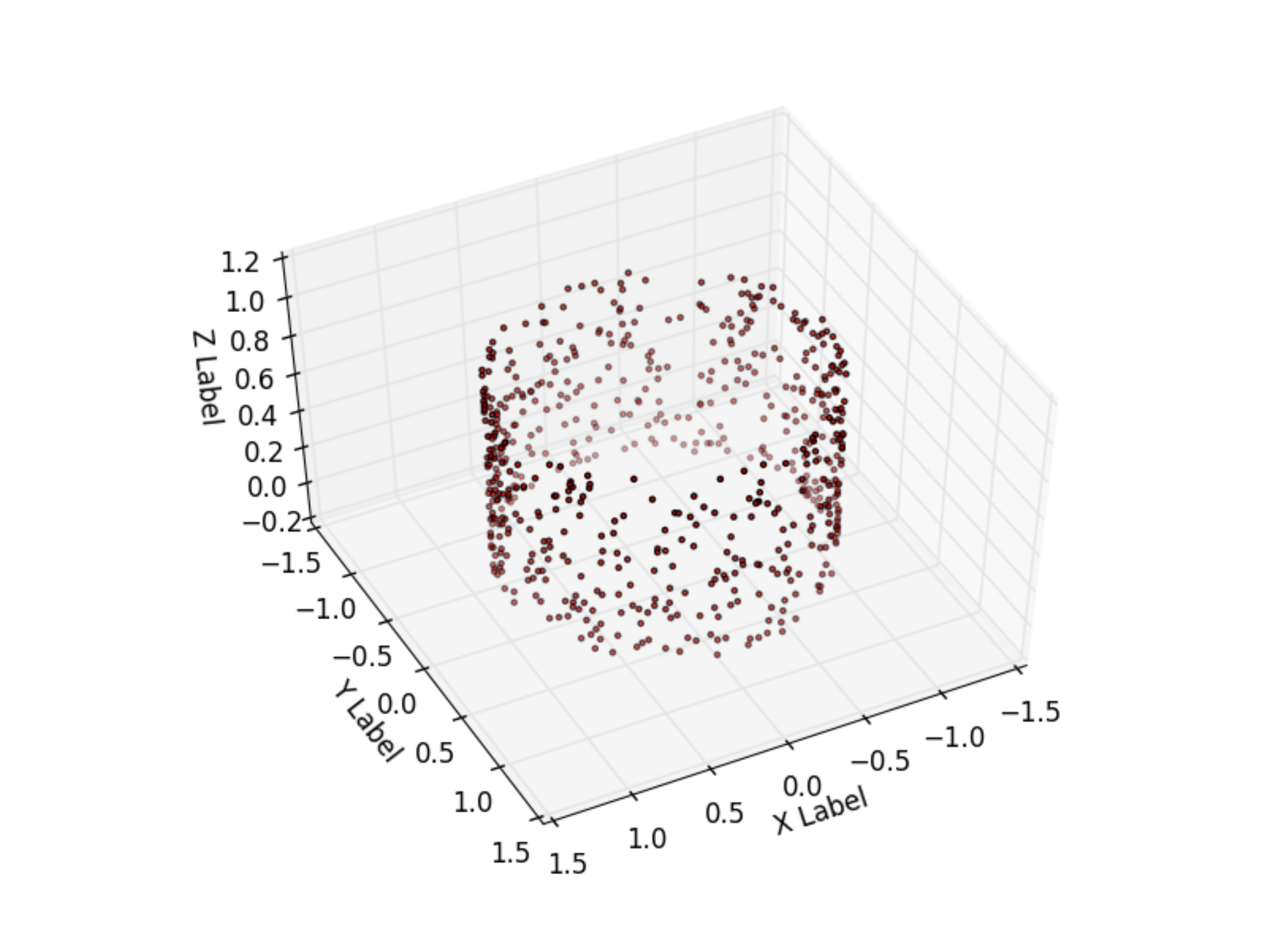} &
\includegraphics[width=0.45\linewidth]{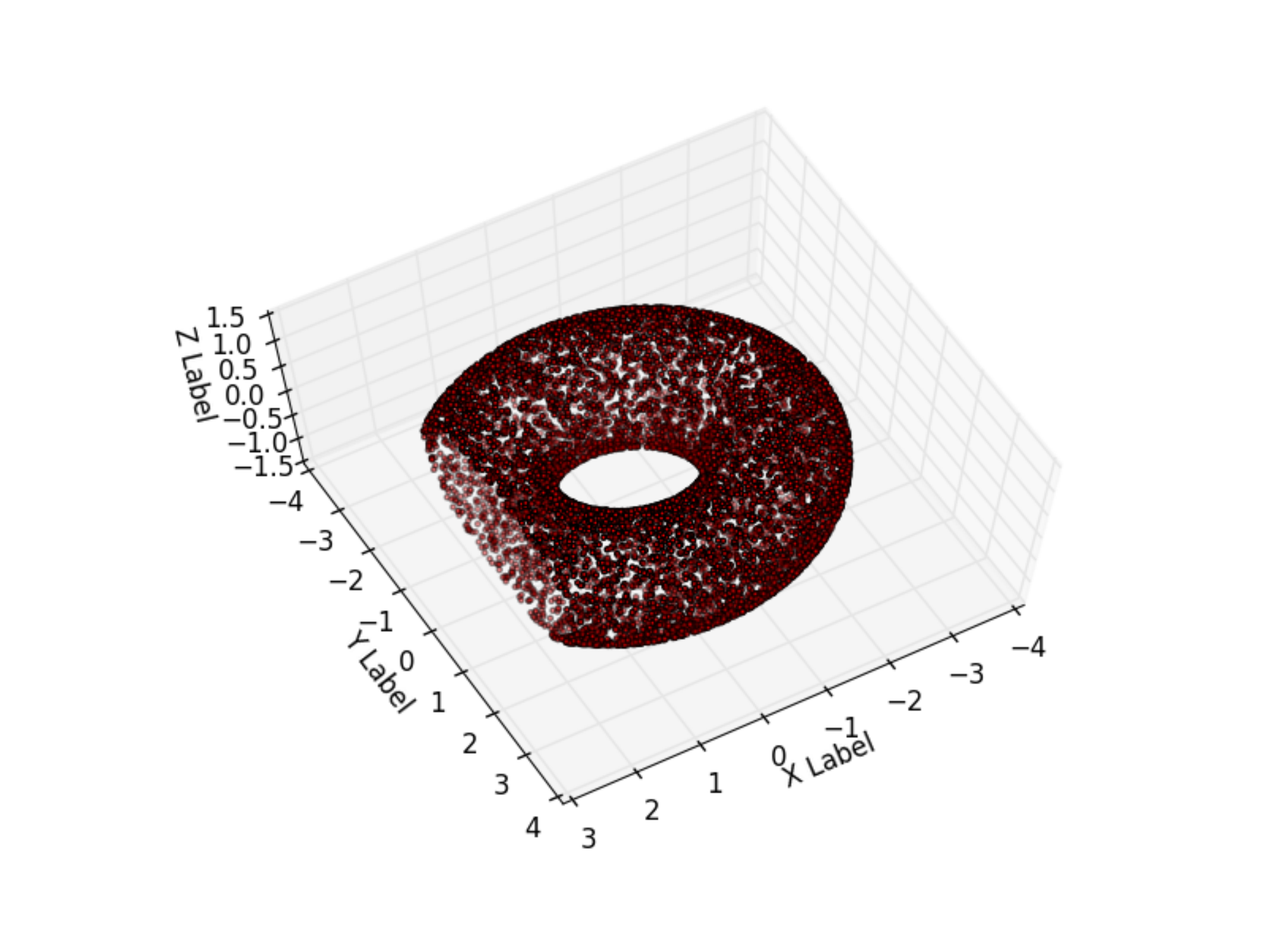}\vspace{-2mm}\\
{\bf (a)} & {\bf (b)}\\
\end{tabular}
\vspace{-2mm}
\caption{Point cloud samples for \emph{cylinder} {\bf (a)} and \emph{torus} {\bf (b)}.}
\label{fig:data}
\end{center}
\end{figure}

\para{Distribution of lower bounds.}
For a fixed sample quality $\epsilon$, we demonstrate the distribution of lower bounds $n^*$ as $\gamma$ increases from $0.05$ to $0.95$ (that is, confidence ranges from $95\%$ to $5\%$). This is shown in Fig.~\ref{fig:nstar}. Intuitively, for a fixed sample quality, we need more point samples in order to obtain higher confidence in topological inference. 

\begin{figure}[!ht]
\begin{center}
\begin{tabular}{cc}
\includegraphics[width=0.4\linewidth]{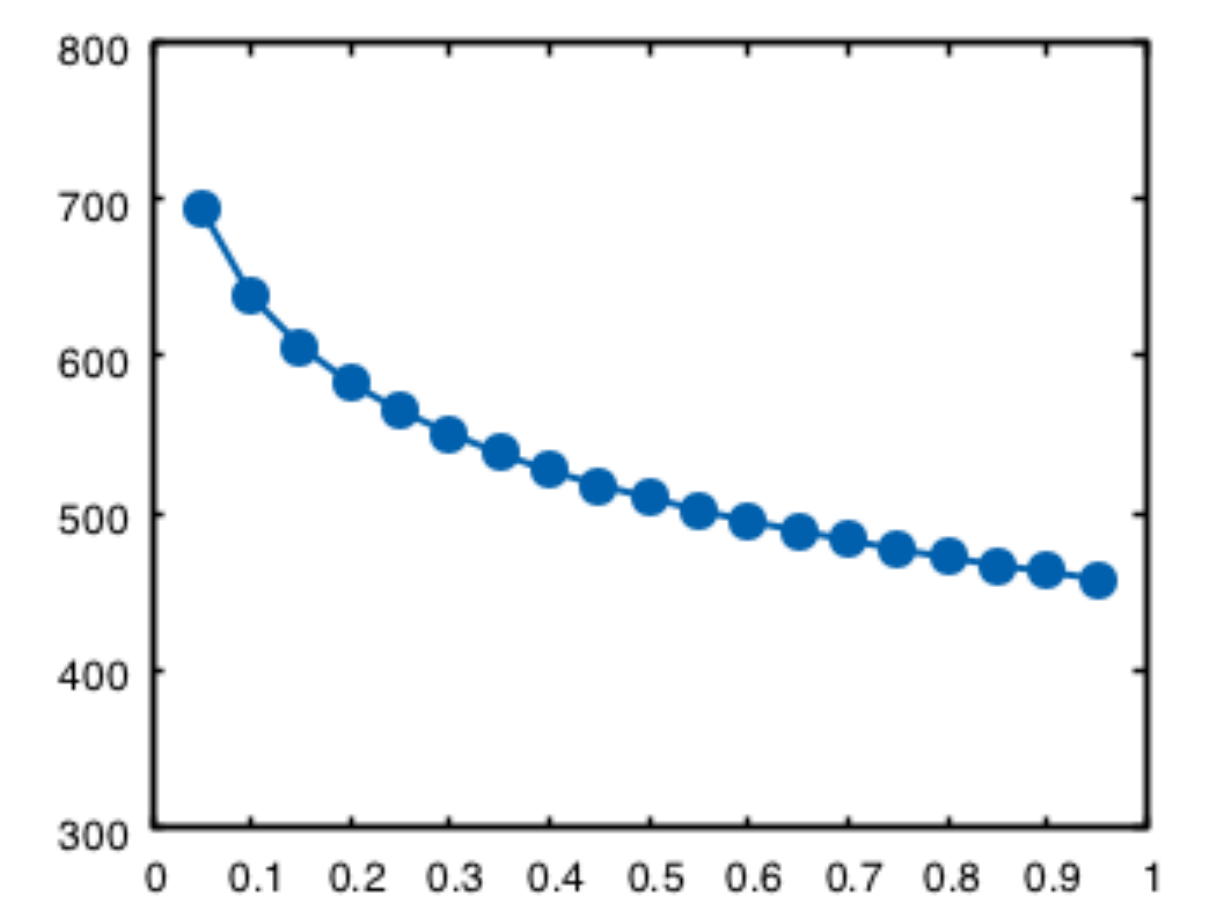} &
\includegraphics[width=0.4\linewidth]{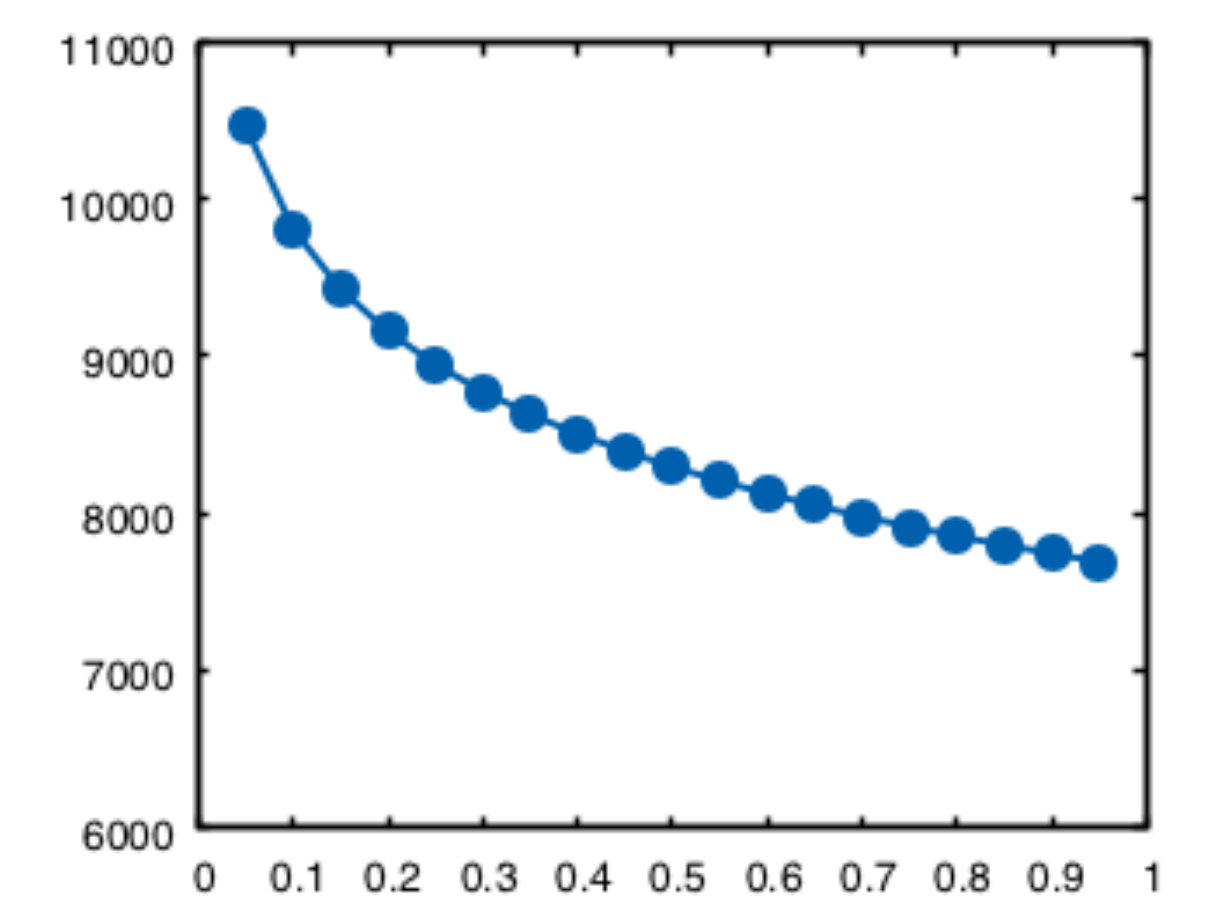}\vspace{-2mm}\\
{\bf (a)} & {\bf (b)}\\
\end{tabular}
\vspace{-2mm}
\caption{Lower bounds for \emph{cylinder} {\bf (a)} and \emph{torus} {\bf (b)} for a fixed $\epsilon =0.49$,  x-axis corresponds to $\gamma$ while y-axis corresponds to $n^*$.}
\label{fig:nstar}
\end{center}
\end{figure}

Meanwhile, for a fixed $\gamma = 0.1$, which corresponds to a confidence of $90\%$, we illustrate the distribution of lower bounds $n^*$ as $\epsilon$ increases from $0.15$ to $0.5$. 
By Theorem~\ref{main}, it is rather obvious that we need more points to have higher quality samples for a fixed confidence level. This is shown in Fig.~\ref{fig:samplequality}.

\begin{figure}[!ht]
\begin{center}
\begin{tabular}{cc}
\includegraphics[width=0.4\linewidth]{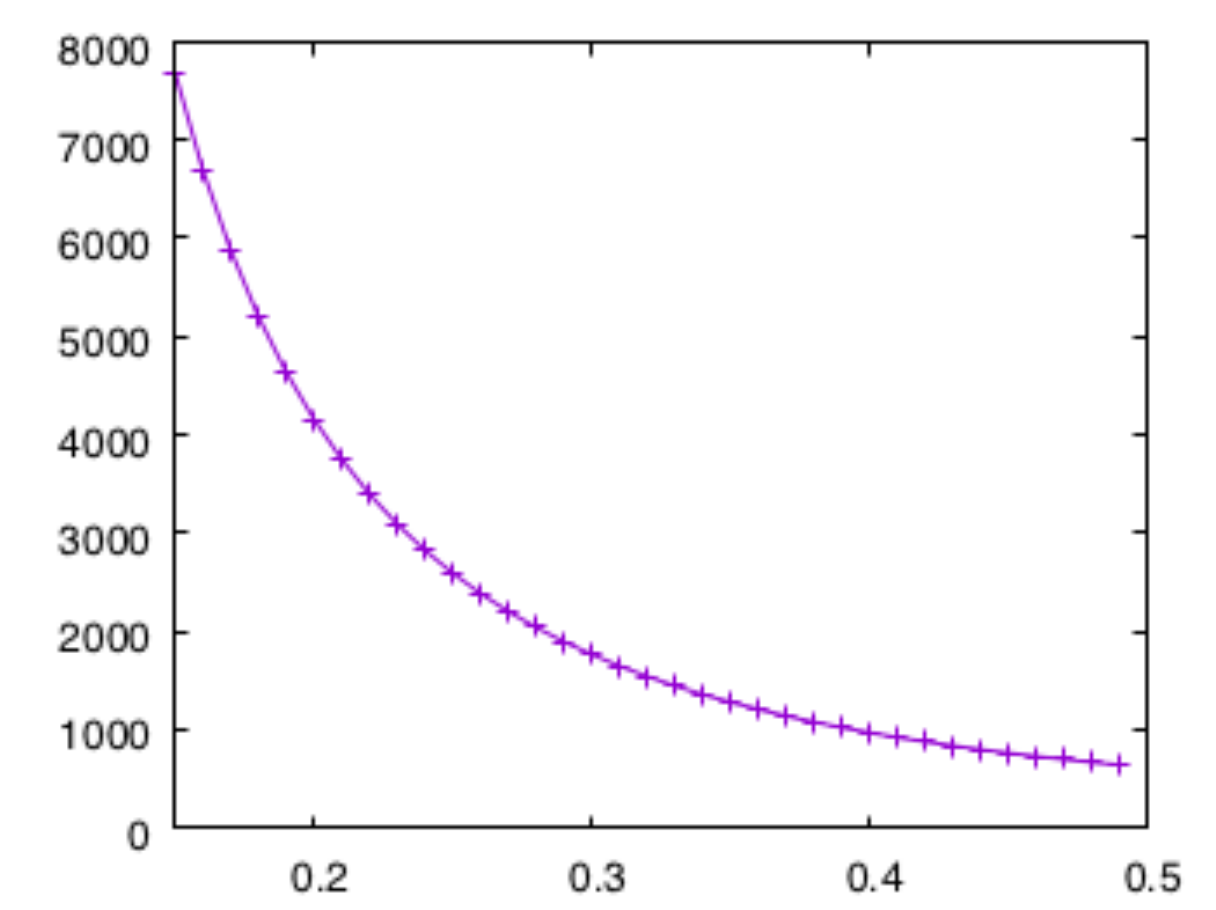} &
\includegraphics[width=0.4\linewidth]{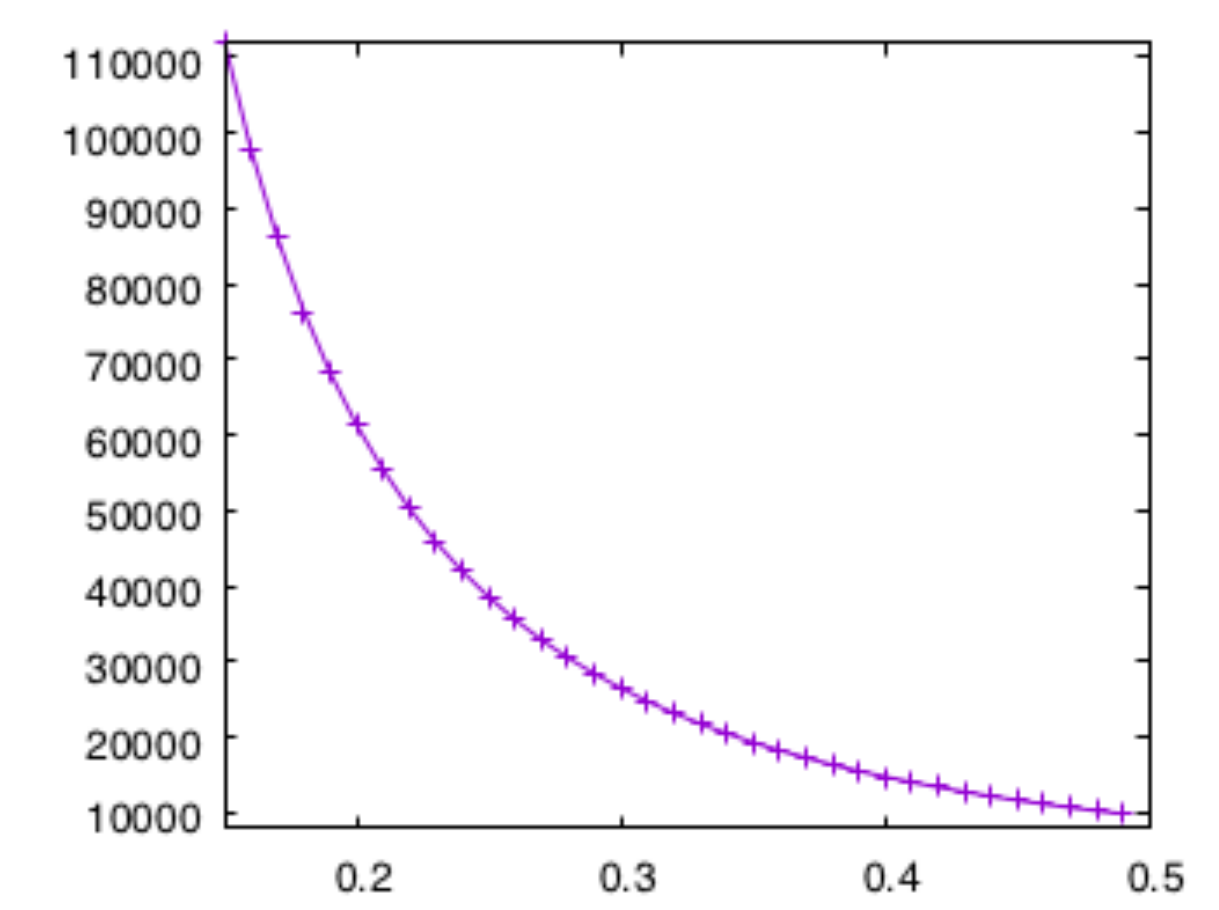}\vspace{-2mm}\\
{\bf (a)} & {\bf (b)}\\
\end{tabular}
\vspace{-2mm}
\caption{Lower bounds for \emph{cylinder} {\bf (a)} and \emph{torus} {\bf (b)} for a fixed $\gamma =0.1$,  x-axis corresponds to the $\epsilon$ while y-axis corresponds to $n^*$.}
\vspace{-2mm}
\label{fig:samplequality}
\end{center}
\end{figure}

\begin{figure}[!ht]
\begin{center}
\begin{tabular}{cc}
\includegraphics[width=0.47\linewidth]{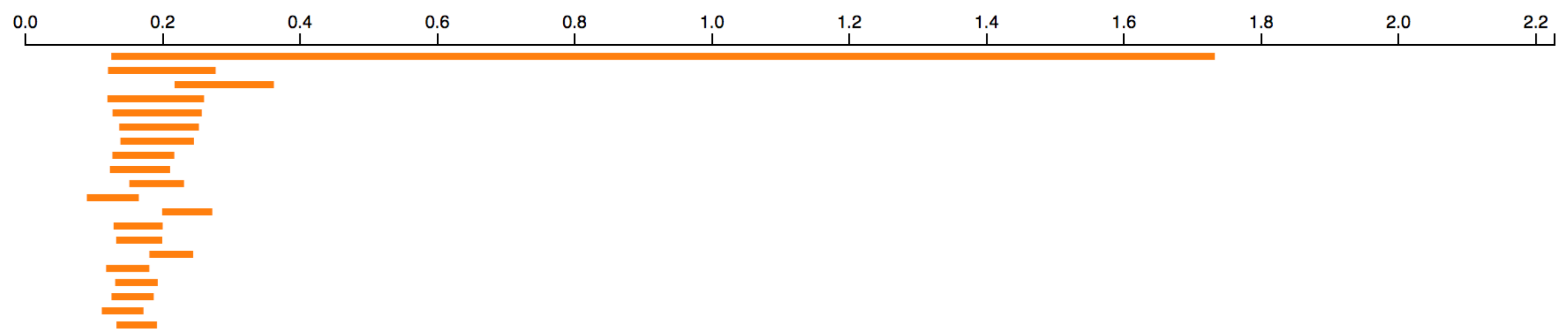} &
\includegraphics[width=0.47\linewidth]{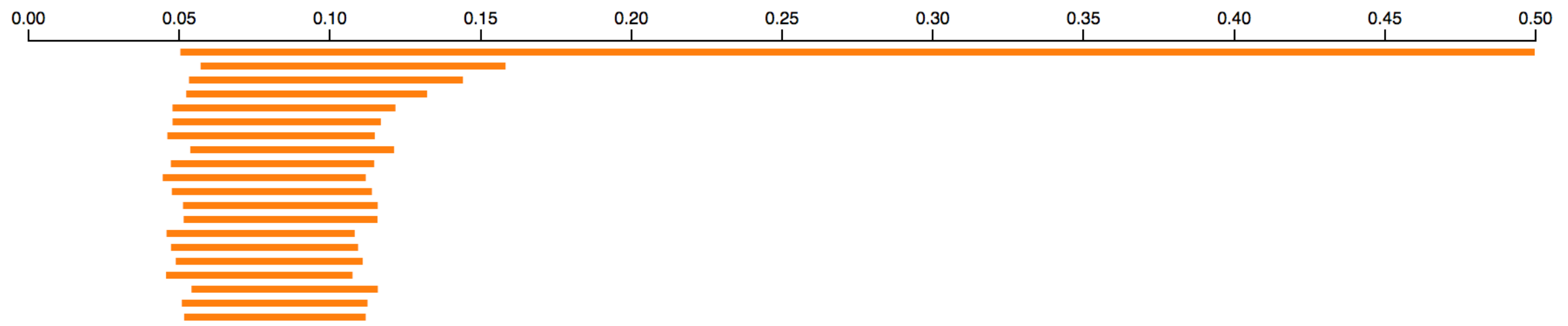} \vspace{-2mm}\\
{\bf (a)} & {\bf (d)}\\
\includegraphics[width=0.47\linewidth]{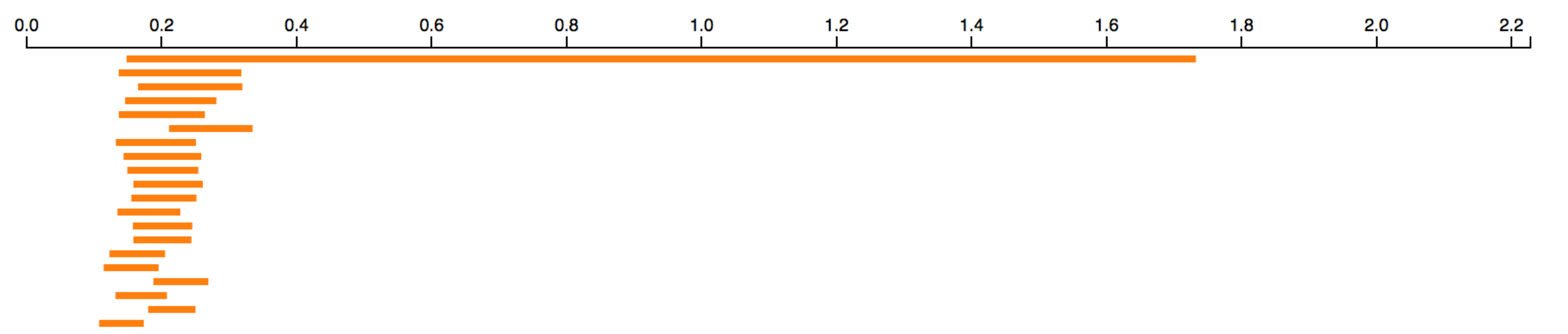} &
\includegraphics[width=0.47\linewidth]{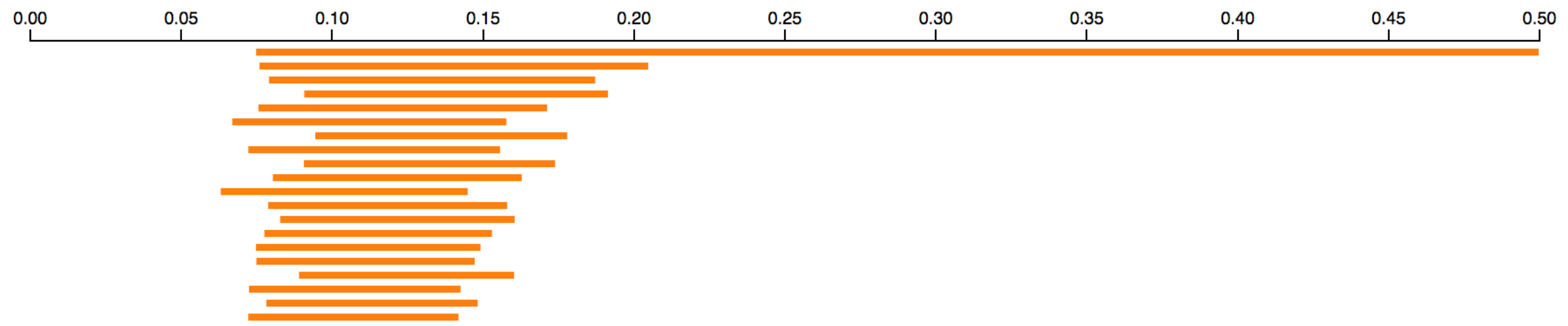} \vspace{-2mm}\\
{\bf (b)} & {\bf (e)}\\
\includegraphics[width=0.47\linewidth]{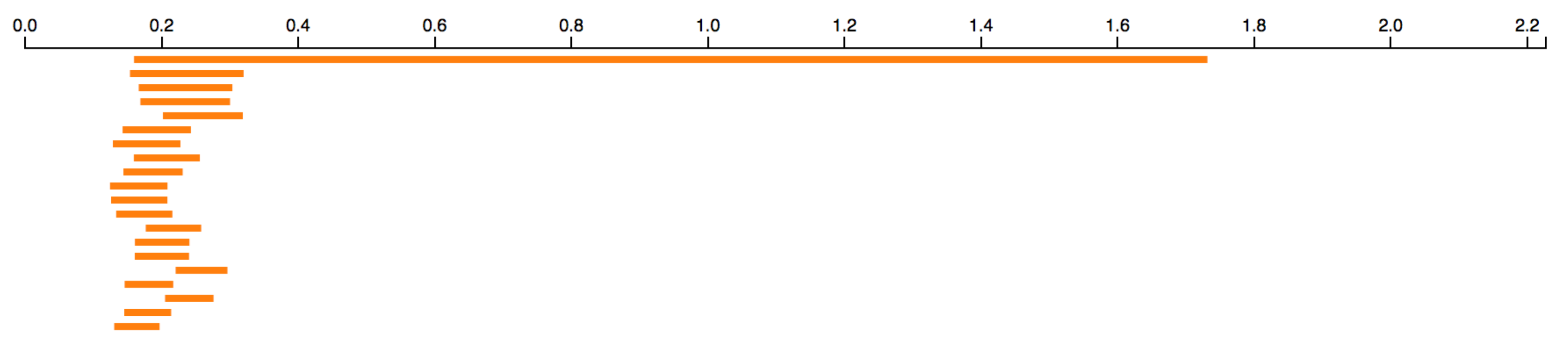} &
\includegraphics[width=0.47\linewidth]{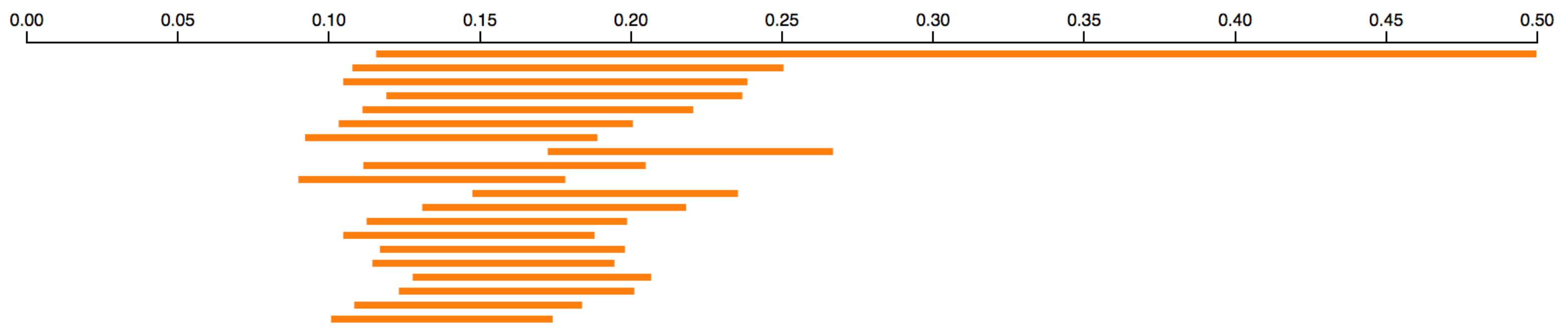} \vspace{-2mm}\\
{\bf (c)} & {\bf (f)}\\
\end{tabular}
\vspace{-2mm}
\caption{Persistent barcodes for \emph{cylinder}. Each plot only shows the top 20 most persistent (longest) cycles. 
For a fixed $\epsilon = 0.49$, $r \in [0, 2.2]$: {\bf (a)} $\gamma = 0.1$, $n^* = 638$; {\bf (b)} $\gamma = 0.1$, $n^* = 583$; and {\bf (c)} $\gamma = 0.3$, $n^* = 551$. 
For a fixed $\gamma=0.1$ ($90\%$ confidence), $r \in [0, 0.5]$: {\bf (d)} $\epsilon = 0.2$, $n^* = 4160 $; {\bf (e)} $\epsilon = 0.3$, $n^* = 1763$; and {\bf (f)} $\epsilon = 0.4$, $n^* = 967$. Notice different scales between the plots on the left and on the right. 
}
\vspace{-2mm}
\label{fig:cylinder-ph}
\end{center}
\end{figure}

\para{Homology computation.} 
Finally we can perform homology computation on the above point clouds; in particular, for a given sample $\overline{x}$ and its corresponding $U$, we show that the homology of $U$ equals the homology of $\m$. Admittedly,  homology is a very weak verification of our main sampling theorem. In fact, if one's goal is only to recover the same homology of a manifold with point  samples, our estimation from Theorem~\ref{main} is an obvious overestimation.  
In other words, our estimation of the lower bound $n^*$ has to account for the boundary condition and to guarantee deformation retract (not just homology or homotopy equivalence). 

Nevertheless, we show the results of homology inference across multiple $\gamma$ with a fixed $\epsilon$, as well as the results across multiple $\epsilon$ with a fixed $\gamma$. We rely on the computation of persistent homology to recover the homological information of a point cloud sample. 
Persistent homology, roughly speaking, operates on a point cloud sample $\overline{x}$ and tracks how the homology of $U(r)=\bigcup_{x\in\overline{x}}B_{r}(x)$ changes as $r$ increases (where typically $r \in [r_0=0, r_k]$, for some positive real value $r_k$). 
Specifically, it applies the homology functor $\Hgroup$ to a sequence of topological spaces connected by inclusions, 
$$U(r_0) \to \cdots \to U(r_i) \to U(r_{i+1}) \to \cdots \to U(r_k),$$ 
and studies a multi-scale notion of homology, 
$$\Hgroup(U(r_0)) \to \cdots \to \Hgroup(U(r_i)) \to \Hgroup(U(r_{i+1})) \to \cdots \to \Hgroup(U(r_k)),$$
see~\cite{EdelsbrunnerHarer2008, EdelsbrunnerHarer2010, Ghrist2008} for introduction to persistent homology. 
We use the software package Ripser~\cite{Bauer2016} for the computation of persistent homology. 
Given a point cloud sample $\overline{x}$, Ripser computes its persistent homology using Vietoris--Rips complexes formed 
on $\overline{x}$ and encodes the homological information using persistence barcodes. 
In a nutshell, each bar in the persistence barcodes captures the time when a homology class appears and disappears as $r$ increases. 
As the homology of a union of balls is guaranteed (by the Nerve Lemma) to be the one of a \v{C}ech complex, 
the results of~\cite{AttaliLieutierSalinas2013} could be utilized to deduce results on a Vietoris--Rips complex from a \v{C}ech complex. 

For \emph{cylinder} dataset, the $1$-dimensional homology of its underlying manifold should be of rank one, as the dataset contains one significant cycle (tunnel). 
For a fixed $\epsilon = 0.49$, we compute the $1$-dimensional persistent homology of the point clouds at parameter $\gamma = 0.1, 0.2, 0.3$ respectively. Their persistent barcodes are shown in Fig.~\ref{fig:cylinder-ph}(a)-(c) respectively. For each plot, the longest bar corresponds to the most significant $1$-dimensional cycle, which clearly corresponds to the true homological feature of the underlying manifold.

Meanwhile, the $1$-dimensional homology of the manifold underlying the \emph{torus} dataset (with a cap chopped off) should be of rank two, as the dataset contains two significant cycles (same as the classic torus dataset).
We have similar results as in the case of \emph{cylinder} dataset. For simplicity, we give the persistent barcodes for $\epsilon = 0.49$, $\gamma = 0.2$, $n^* = 9157$ in Fig.~\ref{fig:torus-ph}. 
Here, the first two longest bars correspond to the two most significant $1$-dimensional cycles, which again clearly correspond to the true homological features of the underlying manifold.  

\begin{figure}[!ht]
\begin{center}
\begin{tabular}{c}
\includegraphics[width=0.8\linewidth]{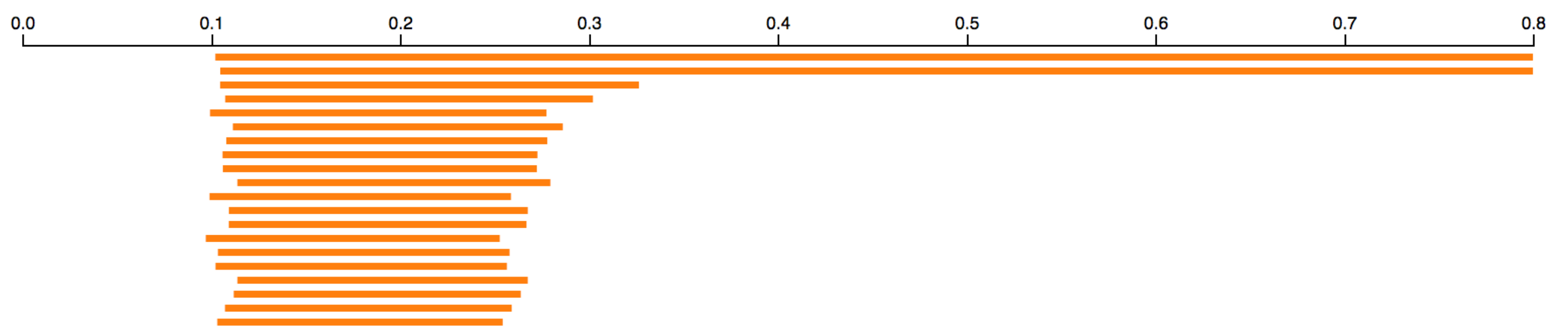} 
\end{tabular}
\vspace{-2mm}
\caption{Persistent barcodes for \emph{torus}, for $\epsilon = 0.49$, 
$\gamma = 0.2$, $n^* = 9157$, $r \in [0,0.8]$. Only the top 20 most persistent (longest) cyclesare shown.}
\vspace{-2mm}
\label{fig:torus-ph}
\end{center}
\end{figure}

\section{Discussions}
\label{sec:discussions}

Given a point cloud sample of a compact, differentiable manifold with boundary, we give a probabilistic notion of sampling condition that is not handled by existing theories. 
Our main results relate topological equivalence between the offset and the manifold as a deformation retract, which is stronger than homological or homotopy equivalence. Many interesting questions remain. 

First, while our sampling condition considers differentiable manifolds with boundary, it cannot be trivially extended to handle manifolds with corners. 
The fundamental difficulty arises because the $\reach$ becomes zero in the case of manifolds with corners. We suspect that deriving practical sampling conditions for manifolds with corners, and in general, for stratified spaces, is challenging and requires new way of thinking. 

Second, we have conducted experiments that verify homological equivalence between the offset of samples and the underlying manifold. However, such an experiment is a very weak verification of our main inference theorem. 
Experimentally computing or verifying deformation retract in the point cloud setting (as stated in Theorem~\ref{main}), possibly via the study of discrete gradient fields, remains an open question. 

\bibliographystyle{abbrv}
\bibliography{manifold_boundary}  

\begin{thebibliography}{10}

\bibitem{AmentaBern1999}
N.~Amenta and M.~Bern.
\newblock Surface reconstruction by {V}oronoi filtering.
\newblock {\em Discrete \& Computational Geometry}, 22(4):481--504, 1999.

\bibitem{AttaliLieutierSalinas2013}
D.~Attali, A.~Lieutier, and D.~Salinas.
\newblock {Vietoris-Rips} complexes also provide topologically correct
  reconstructions of sampled shapes.
\newblock {\em Computational Geometry: Theory and Applications},
  46(4):448--465, 2013.

\bibitem{Bauer2016}
U.~Bauer.
\newblock Ripser.
\newblock \url{https://github.com/Ripser/ripser}, 2016.

\bibitem{BelkinQueWang2012}
M.~Belkin, Q.~Que, Y.~Wang, and X.~Zhou.
\newblock Graph {L}aplacians on singular manifolds: toward understanding
  complex spaces: graph laplacians on manifolds with singularities and
  boundaries.
\newblock {\em Conference on learning theory}, 2012.

\bibitem{Bendich2008}
P.~Bendich.
\newblock {\em Analyzing Stratified Spaces Using Persistent Versions of
  Intersection and Local Homology}.
\newblock PhD thesis, Duke University, 2008.

\bibitem{BendichCohen-SteinerEdelsbrunner2007}
P.~Bendich, D.~Cohen-Steiner, H.~Edelsbrunner, J.~Harer, and D.~Morozov.
\newblock Inferring local homology from sampled stratified spaces.
\newblock {\em IEEE Symposium on Foundations of Computer Science}, pages
  536--546, 2007.

\bibitem{BendichHarer2011}
P.~Bendich and J.~Harer.
\newblock Persistent intersection homology.
\newblock {\em Foundations of Computational Mathematics}, 11:305--336, 2011.

\bibitem{BendichWangMukherjee2012}
P.~Bendich, B.~Wang, and S.~Mukherjee.
\newblock Local homology transfer and stratification learning.
\newblock {\em {ACM-SIAM} Symposium on Discrete Algorithms}, pages 1355--1370,
  2012.

\bibitem{BrownWang2018}
A.~Brown and B.~Wang.
\newblock Sheaf-theoretic stratification learning.
\newblock {\em International Symposium on Computational Geometry}, 2018.

\bibitem{ChazalCohen-SteinerLieutier2009}
F.~Chazal, D.~Cohen-Steiner, and A.~Lieutier.
\newblock A sampling theory for compact sets in {E}uclidean space.
\newblock {\em Discrete Computational Geometry}, 41(3):461--479, 2009.

\bibitem{ChengDeyRamos2005}
S.-W. Cheng, T.~K. Dey, and E.~A. Ramos.
\newblock Manifold reconstruction from point samples.
\newblock {\em {ACM-SIAM} Symposium on Discrete Algorithms}, pages 1018--1027,
  2005.

\bibitem{EdelsbrunnerHarer2008}
H.~Edelsbrunner and J.~Harer.
\newblock Persistent homology - a survey.
\newblock {\em Contemporary Mathematics}, 453:257--282, 2008.

\bibitem{EdelsbrunnerHarer2010}
H.~Edelsbrunner and J.~Harer.
\newblock {\em Computational Topology: An Introduction}.
\newblock American Mathematical Society, 2010.

\bibitem{Ghrist2008}
R.~Ghrist.
\newblock Barcodes: the persistent topology of data.
\newblock {\em Bullentin of the American Mathematical Society}, 45(1):61--75,
  2008.

\bibitem{GoreskyMacPherson1982}
M.~Goresky and R.~Mac{P}herson.
\newblock Intersection homology {I}.
\newblock {\em Topology}, 19:135--162, 1982.

\bibitem{GoreskyMacPherson1988}
M.~Goresky and R.~Mac{P}herson.
\newblock {\em Stratified Morse Theory}.
\newblock Springer-Verlag, 1988.

\bibitem{HaroRandallSapiro2005}
G.~Haro, G.~Randall, and G.~Sapiro.
\newblock Stratification learning: Detecting mixed density and dimensionality
  in high dimensional point clouds.
\newblock {\em Advances in Neural Information Processing Systems (NIPS)}, 17,
  2005.

\bibitem{LermanZhang2010}
G.~Lerman and T.~Zhang.
\newblock Probabilistic recovery of multiple subspaces in point clouds by
  geometric lp minimization.
\newblock {\em Annals of Statistics}, 39(5):2686--2715, 2010.

\bibitem{Li2011}
S.~Li.
\newblock Concise formulas for the area and volume of a hyperspherical cap.
\newblock {\em Asian Journal of Mathematics and Statistics}, 4(1):66--70, 2011.

\bibitem{Nanda2017}
V.~Nanda.
\newblock Local cohomology and stratification.
\newblock ArXiv:1707.00354, 2017.

\bibitem{NiyogiSmaleWeinberger2008}
P.~Niyogi, S.~Smale, and S.~Weinberger.
\newblock Finding the homology of submanifolds with high confidence from random
  samples.
\newblock {\em Discrete and Computational Geometry}, 39(1):419--411, 2008.

\bibitem{SkrabaWang2014}
P.~Skraba and B.~Wang.
\newblock Approximating local homology from samples.
\newblock {\em ACM-SIAM Symposium on Discrete Algorithms (SODA)}, pages
  174--192, 2014.

\bibitem{VidalMaSastry2005}
R.~Vidal, Y.~Ma, and S.~Sastry.
\newblock Generalized principal component analysis ({GPCA}).
\newblock {\em IEEE Transactions on Pattern Analysis and Machine Intelligence},
  27:1945--1959, 2005.

\bibitem{Weinberger1994}
S.~Weinberger.
\newblock {\em The topological classification of stratified spaces}.
\newblock University of Chicago Press, Chicago, IL, 1994.

\end{thebibliography}
\end{document}